\numberwithin{equation}{section}
\newtheorem{thm}{Theorem}[section]
\newtheorem{prop}[thm]{Proposition}
\theoremstyle{definition}
\newtheorem{defn}[thm]{Definition}
\newtheorem{Remark}[thm]{Remark}
\newtheorem{Assumption}[thm]{Assumption ${\mathcal D}$-pb}
\newtheorem{AssumptionW}[thm]{Assumption ${\mathcal D}$-pbw}
\begin{document}


\newcommand{\arXivNumber}{1509.03822}

\renewcommand{\PaperNumber}{078}

\FirstPageHeading

\ShortArticleName{${\mathcal D}$-Pseudo-Bosons, Complex Hermite Polynomials, and Integral Quantization}

\ArticleName{$\boldsymbol{\mathcal D}$-Pseudo-Bosons, Complex Hermite Polynomials,\\ and Integral Quantization}

\Author{S.~Twareque ALI~$^{\dag^1}$, Fabio BAGARELLO~$^{\dag^2}$ and Jean Pierre GAZEAU~$^{\dag^3\dag^4}$}

\AuthorNameForHeading{S.T.~Ali, F.~Bagarello and J.P.~Gazeau}

\Address{$^{\dag^1}$~Department of Mathematics and Statistics, Concordia University,\\
\hphantom{$^{\dag^1}$}~Montr\'eal, Qu\'ebec, Canada H3G 1M8}
\EmailDD{\href{mailto:twareque.ali@concordia.ca}{twareque.ali@concordia.ca}}

\Address{$^{\dag^2}$~Dipartimento di Energia, ingegneria dell'Informazione e modelli Matematici,\\
\hphantom{$^{\dag^2}$}~Scuola Politecnica, Universit\`a di Palermo, I-90128  Palermo,  and INFN, Torino, Italy}
\EmailDD{\href{mailto:fabio.bagarello@unipa.it}{fabio.bagarello@unipa.it}}
\URLaddressDD{\url{http://www.unipa.it/fabio.bagarello}}

\Address{$^{\dag^3}$~APC, UMR 7164, Univ Paris  Diderot, Sorbonne Paris-Cit\'e, 75205 Paris, France}
\EmailDD{\href{mailto:gazeau@apc.univ-paris7.fr}{gazeau@apc.univ-paris7.fr}}
\Address{$^{\dag^4}$~Centro Brasileiro de Pesquisas F\'isicas,  Rio de Janeiro, 22290-180 Rio de Janeiro, Brazil}

\ArticleDates{Received March 28, 2015, in f\/inal form September 21, 2015; Published online October 01, 2015}

\Abstract{The ${\mathcal D}$-pseudo-boson formalism is illustrated with two examples. The f\/irst one involves deformed complex Hermite polynomials built using f\/inite-dimensional irreducible representations of the group ${\rm GL}(2,{\mathbb C})$ of invertible $2 \times 2$ matrices with complex entries. It reveals interesting aspects of these representations. The second example is based on a pseudo-bosonic generalization of  operator-valued functions of a complex variable which resolves the identity. We show that such a generalization allows one to obtain a quantum pseudo-bosonic version of the complex plane viewed as the canonical phase space and to understand functions of the pseudo-bosonic operators as the quantized versions of functions of a complex variable.}

\Keywords{pseudo-bosons; coherent states; quantization; complex Hermite polynomials; f\/inite group representation}

\Classification{81Q12; 47C05; 81S05; 81R30; 33C45}


\section{Introduction}

Two new illustrations of the ${\mathcal D}$-pseudo-boson (${\mathcal D}$-pb) formalism~\cite{bagbook} are presented in this paper. Both display original and non-trivial results.  The f\/irst one involves a family of biorthogonal polynomials, named deformed complex Hermite polynomials, various properties of which have been worked out during the past years (see for instance~\cite{alimourshah,bsali} and references therein). Their construction involves a {\em deformation} of  the well-known bivariate complex Hermite polyno\-mials~\cite{ghanmi15,Gha,Ism1,ismail15,Ism2,Ism3,Ism4} using f\/inite-dimensional irreducible representations of the group ${\rm GL}(2,{\mathbb C})$ of invertible $2 \times 2$ matrices with complex entries and reveals interesting aspects of these representations. The second example introduces families of vectors and operators in the underlying Hilbert space built in the same same way as standard coherent states, as orbits in the Hilbert space of the projective Weyl--Heisenberg group. An appealing consequence of this construction is the resolution of the identity satisf\/ied by these families,  possibly on a dense subspace.  Hence,  it becomes possible to proceed with  integral quantizations~\cite{bergaz14,gazhell15}, which then yield  the correct pseudo-bosonic commutation rules. This unorthodox path to the quantum world  can give rise to interesting developments regarding the possibility  of building self-adjoint operators from a non-real classical function on the phase space while real functions could have non-symmetric quantum operator counterparts.

The organization of the article is as follows. In Section~\ref{mathpb} we present the necessary mathematical material for understanding the ${\mathcal D}$-pb formalism. In Section~\ref{sta-biorthog} we start with a pair of bosonic operators to build orthonormal bases in a Hilbert space. We next make use of f\/inite-dimensional representations of the group ${\rm GL}(2,{\mathbb C})$ to deform this 2-boson algebra into a pair of pseudo-bosons.
In Section~\ref{hpolpb} we illustrate the procedure with deformed complex Hermite polynomials. Note that a set of bi-orthogonal Hermite polynomials were presented, with not much interest in mathematical rigour, in~\cite{tri} via the pseudo-boson operators.  Useful inequalities/estimates  are then proved in Section~\ref{normest}. They are necessary for characterizing ${\mathcal D}$-pb in this ${\rm GL}(2,{\mathbb C})$ context. They are also necessary to get total families in the relevant Hilbert space.
In Section~\ref{bcpb} we introduce two  ``displacement operators'' depending on a  complex variable, and arising as a consequence of the  existence of a pair of ${\mathcal D}$-pb as introduced in Section~\ref{mathpb}. By using these operators we derive two types of ``oblique'' resolutions of the identity (see also~\cite{tri} for previous works in this direction).
Based on these results, we proceed in Section~\ref{quantpb} to the integral  quantization of functions (or distributions)  of a complex variable, obtaining thereby a set of original results. In particular, the quantized version of the canonical Poisson bracket of conjugate pairs $z$ and $\bar z$ is precisely the pseudo-bosonic commutation rule.
In Section~\ref{conclu} we sketch what could be done in the future, starting from the  results presented in this paper. The three appendices  give a set of technical formulae used in the main body of the paper. Those concerning the f\/inite-dimensional  irreducible representations of the group ${\rm GL}(2,{\mathbb C})$ are given in Appendix~\ref{appuir}, while those concerning some of the  asymptotic behaviours of  the corresponding matrix elements are given in Appendix~\ref{asympTs}. Finally, in Appendix~\ref{AmatelMD} we deduce the matrix elements of the bi-displacement operators introduced in connection with bi-coherent states and our pseudo-bosonic integral quantization.

\section[The mathematics of ${\mathcal D}$-pbs]{The mathematics of $\boldsymbol{\mathcal D}$-pbs}
\label{mathpb}

Let ${\mathcal H}$ be a Hilbert space with scalar product $\langle\cdot,\cdot\rangle$ and associated norm $\Vert\cdot\Vert$.   Further, let~$a$ and~$b$ be two operators
on ${\mathcal H}$, with domains $D(a)$ and~$D(b)$ respectively, $a^\dagger$ and $b^\dagger$ their respective adjoints; we assume the existence of a dense set ${\mathcal D}$  in ${\mathcal H}$
such that $a^\sharp{\mathcal D}\subseteq{\mathcal D}$ and $b^\sharp{\mathcal D}\subseteq{\mathcal D}$, where~$x^\sharp$ is either~$x$ or~$x^\dagger$: ${\mathcal D}$ is assumed to be stable under  the
action of~$a$,~$b$, $a^\dagger$ and~$b^\dagger$. Note that we are not requiring here that~${\mathcal D}$ coincide with, e.g., $D(a)$ or~$D(b)$. However
due to the fact that~$a^\sharp f$ is well def\/ined, and belongs to ${\mathcal D}$ for all~$f\in{\mathcal D}$, it is clear that~${\mathcal D}\subseteq D(a^\sharp)$.
Analogously, we conclude that ${\mathcal D}\subseteq D(b^\sharp)$.

\begin{defn}\label{FBdef21}
The operators $(a,b)$ are ${\mathcal D}$-pseudo-bosonic (${\mathcal D}$-pb) if, for all $f\in{\mathcal D}$, we have
\begin{gather}
 abf-baf=f. \label{FB31}
\end{gather}
\end{defn}

To simplify the notation at many places in the sequel,  instead of~(\ref{FB31}) we will simply write $[a,b]=I$, where~$I$ is the identity operator on~${\mathcal H}$
  having in mind that both sides of this equation
have to act on a certain~$f\in{\mathcal D}$.

Our  working assumptions are the following:

\begin{Assumption}\label{Assumption1}
There exists a non-zero $\varphi_{ 0}\in{\mathcal D}$ such that $a\varphi_{ 0}=0$.
\end{Assumption}

\begin{Assumption}\label{Assumption2}
There exists a non-zero $\Psi_{ 0}\in{\mathcal D}$ such that $b^\dagger\Psi_{ 0}=0$.
\end{Assumption}

We then def\/ine the vectors
\begin{gather}
\varphi_n:=\frac{1}{\sqrt{n!}} b^n\varphi_0,\qquad \Psi_n:=\frac{1}{\sqrt{n!}} {a^\dagger}^n\Psi_0, \qquad n\geq0,\label{FB32}
\end{gather} and  introduce the sets ${\mathcal F}_\Psi=\{\Psi_{ n}, \, n\geq0\}$ and ${\mathcal F}_\varphi=\{\varphi_{ n}, \,n\geq0\}$.
Since ${\mathcal D}$ is stable in particular under the action of~$a^\dagger$ and~$b$, we deduce that each $\varphi_n$ and each~$\Psi_n$ belongs to~${\mathcal D}$
and, therefore, to the domains of~$a^\sharp$,~$b^\sharp$ and~$N^\sharp$, where~$N=ba$.

It is now simple to deduce the following lowering and raising relations:
\begin{alignat*}{3}
& b \varphi_n=\sqrt{n+1} \varphi_{n+1}, \qquad && n\geq 0,& \nonumber\\
& a \varphi_0=0,\qquad a\varphi_n=\sqrt{n} \varphi_{n-1}, \qquad &&  n\geq 1,& \nonumber\\
& a^\dagger\Psi_n=\sqrt{n+1} \Psi_{n+1}, \qquad && n\geq 0,& \nonumber\\
& b^\dagger\Psi_0=0,\qquad b^\dagger\Psi_n=\sqrt{n} \Psi_{n-1}, \qquad && n\geq 1,& 
\end{alignat*}
as well as the following eigenvalue equations: $N\varphi_n=n\varphi_n$ and   $N^\dagger\Psi_n=n\Psi_n$, $n\geq0$, where
$N^\dagger=a^\dagger b^\dagger$.  As a consequence
of these  equations,  choosing the normalization of~$\varphi_0$ and~$\Psi_0$ in such a way $\langle \varphi_0,\Psi_0\rangle =1$, we also deduce that
\begin{gather*}
 \langle \varphi_n,\Psi_m\rangle =\delta_{n,m}, 
\end{gather*}
 for all $n, m\geq0$.

The conclusion is, therefore, that ${\mathcal F}_\varphi$ and ${\mathcal F}_\Psi$
are biorthonormal sets of eigenstates of~$N$ and~$N^\dagger$, respectively. This, in principle, does not allow us to conclude that they are also bases  for~${\mathcal H}$, or even Riesz bases. However, let us introduce for the time being the following assumption:

\begin{Assumption}\label{Assumption3}
${\mathcal F}_\varphi$ is a basis for ${\mathcal H}$.
\end{Assumption}

Notice that this automatically implies that  ${\mathcal F}_\Psi$ is a basis for ${\mathcal H}$ as well~\cite{you}. However,  examples are known in which this natural assumption is not satisf\/ied; see, for instance, \cite{bag4,bag2,bag1,bag3, davies,davies2}.
In view of this fact,  a weaker version of Assumption~${\mathcal D}$-pb~\ref{Assumption3} has been introduced recently: for that the concept of
{\em ${\mathcal G}$-quasi bases} is necessary.

\begin{defn}\label{defquasibases}
Let ${\mathcal G}$ be a suitable dense subspace of ${\mathcal H}$. Two biorthogonal sets ${\mathcal F}_\eta=\{\eta_n\in{\mathcal H}$, $n\geq0\}$
and ${\mathcal F}_\Phi=\{\Phi_n\in{\mathcal H},\,n\geq0\}$ are called ${\mathcal G}$-quasi bases if, for all $f, g\in {\mathcal G}$, the following holds:
\begin{gather}
\langle f,g\rangle =\sum_{n\geq0}\langle f,\eta_n\rangle \langle \Phi_n,g\rangle =\sum_{n\geq0}\langle f,\Phi_n\rangle \langle \eta_n,g\rangle. \label{FB35}
\end{gather}
\end{defn}

 Is is clear that, while Assumption ${\mathcal D}$-pb~\ref{Assumption3} implies (\ref{FB35}), the reverse is false. However, if ${\mathcal F}_\eta$ and ${\mathcal F}_\Phi$ satisfy~(\ref{FB35}), we still have some (weak) form of the resolution of the identity  and we can still deduce
interesting results, specially in view of quantization as presented in Section~\ref{quantpb}. For the sake of simplicity, we will often use in the sequel the popular shorthand notation
\begin{gather}
\label{resunipb}
\sum_{n\geq 0} | \eta_n\rangle\langle\Phi_n |= I,
\end{gather}
to be understood in the weak sense on a  dense subspace.
   Incidentally we see that if $f\in {\mathcal G}$ is orthogonal to all the $\Phi_n$'s (or to all the $\eta_n$'s), then $f$
is necessarily zero: we say that ${\mathcal F}_\Phi$ (or ${\mathcal F}_\eta$) is total in~${\mathcal G}$. {Note that this does not imply that these families are total  in the whole Hilbert space ${\mathcal H}$ since we suppose that~\eqref{FB35} holds for $f,g\in{\mathcal G}$, but not, in general, for $f,g\in{\mathcal H}$. Therefore we cannot conclude that each vector of~${\mathcal H}$ orthogonal to, say, all the $\varphi_n$ is necessarily zero, while we can conclude this for each vector of~${\mathcal G}$.

With this in mind, we now consider the aforementioned weaker form of Assumption  ${\mathcal D}$-pb~\ref{Assumption3}:

\begin{AssumptionW}\label{Assumption4}
${\mathcal F}_\varphi$ and ${\mathcal F}_\Psi$ are ${\mathcal G}$-quasi bases, for some dense subspace~${\mathcal G}$  in~${\mathcal H}$.
\end{AssumptionW}

Two important operators, in general unbounded, are the following ones
\begin{gather*}
D(S_\varphi)=\bigg\{f\in{\mathcal H}\colon  \sum_{ n} \langle \varphi_{n},f\rangle \varphi_{n} \ \mbox{exists in} \  {\mathcal H} \bigg\}\qquad \mbox{and}\qquad S_\varphi f=\sum_{ n} \langle \varphi_{n},f\rangle ,\varphi_{n}
\end{gather*}
for all $f\in D(S_\varphi)$, and, similarly,
\begin{gather*}
D(S_\Psi)=\bigg\{h\in{\mathcal H}\colon  \sum_{ n} \langle \Psi_{n},h\rangle \Psi_{ n} \ \mbox{exists in}  \ {\mathcal H} \bigg\}\qquad \mbox{and}\qquad S_\Psi h=\sum_{ n} \langle \Psi_{n},h\rangle\Psi_{ n},
\end{gather*}
 for all $h\in D(S_\Psi)$. It is clear that $\Psi_n\in D(S_\varphi)$ and $\varphi_n\in D(S_\Psi)$, for all $n\geq0$.
 However, since ${\mathcal F}_\varphi$ and ${\mathcal F}_\Psi$ are not required to be bases here, it is convenient to work under the additional
 hypothesis that ${\mathcal D}\subseteq D(S_\Psi)\cap D(S_\varphi)$, which is often true in concrete examples~\cite{bagbook}. In this way~$S_\Psi$ and~$S_\varphi$ are automatically densely def\/ined.
 Also, since $\langle S_\Psi f,g\rangle =\langle  f,S_\Psi g\rangle $ for all $f,g\in D(S_\Psi)$, $S_\Psi$ is a symmetric operator,
 as well as $S_\varphi\colon \langle S_\varphi f,g\rangle =\langle f,S_\varphi g\rangle $ for all $f,g\in D(S_\varphi)$.
 Moreover, since they are positive operators, they are also semibounded~\cite{ped}
\begin{gather*}
 \langle S_\varphi f,f\rangle \geq 0,\qquad \langle S_\Psi h,h\rangle \geq0,
 \end{gather*}
for all $f\in D(S_\varphi)$ and $h\in D(S_\Psi)$. Hence both these operators admit  self-adjoint (Friedrichs) extensions,  $\hat S_\varphi$ and
$\hat S_\Psi$~\cite{ped}, which are both also positive. Now, the spectral theorem ensures us that we can def\/ine the
square roots $\hat S_\Psi^{1/2}$ and $\hat S_\varphi^{1/2}$, which are
self-adjoint and positive and, in general, unbounded. These operators can be used to def\/ine new scalar products and new related notions of the  adjoint, as well as new mutually  orthogonal vectors. These and other aspects, which are particularly relevant in the present context, are discussed in some detail in~\cite{bagbook}.

\section{Biorthogonal families of vectors and polynomials}\label{sta-biorthog}

In this section we present the f\/irst illustration of the above formalism with an explicit group theoretical construction of pseudo-bosonic operators.
We start with a pair of bosonic operators~$a_i$, $a_i^\dag$, $i = 1,2$, acting ({\em irreducibly}) on the Hilbert space~$\mathcal H$. They satisfy the commutation relations,
\begin{gather}
\big[a_i, a_j^\dag \big] = I\delta_{ij}, \qquad    [a_i, a_j ] = \big[a_i^\dag , a_j^\dag \big] = 0, \qquad i,j = 1,2 .
\label{2dim-ccr}
\end{gather}
Starting with the (normalized) ground state vector $\varphi_{0,0}$, for which $a_i \varphi_{0,0} = 0$, $i=1,2$, we def\/ine the vectors,
\begin{gather}
 \varphi_{n_1,n_2} = \frac {\big(a_1^\dag\big)^{n_1}  \big(a_2^\dag\big)^{n_2}}{\sqrt{n_1!  n_2!}}\varphi_{0,0}, \qquad n_1, n_2 =0,1,2, \ldots , \infty .
 \label{stan-onb}
\end{gather}
 These vectors form an orthonormal basis in~$\mathcal H$.

 We now reorder the elements  of this basis as in~\eqref{L-subsp-vect} below. For any integer~$L\geq 0$, let us def\/ine the set of~$L+1$ vectors
   \begin{gather}
     f^L_m =  \frac {\big(a_1^\dag\big)^m  \big(a_2^\dag\big)^{L-m}}{\sqrt{m!  (L-m)!}}\varphi_{0,0} = \varphi_{m, L-m}, \qquad m =0,1,2, \ldots, L,
 \label{L-subsp-vect}
 \end{gather}
and denote by $\mathcal H^L$ the $(L+1)$-dimensional subspace of $\mathcal H$ spanned by these vectors. Clearly
\begin{gather*}
  \langle f^L_m , f^M_n\rangle = \delta_{LM} \delta_{mn} \qquad \text{and} \qquad
\mathcal H = \oplus_{L=0}^\infty \mathcal H^L .
\end{gather*}
Hence,  the $f^L_m $ are  a relabeling of the vectors~$\varphi_{n_1,n_2}$ which will be useful in the sequel.

\subsection{A second basis and a Cuntz algebra}\label{sec-bas-cuntz}

Using the vectors $f^L_m$, we now introduce a second relabeling, this time using a single index. We set
\begin{gather}
  F_n = f^L_m  =  \varphi_{m, L-m}, \qquad \text{where} \qquad  n = \frac {L(L+1)}2 + m .
\label{F-vects}
\end{gather}
Note that in making this relabeling, we have used the bijective map $\beta\colon \mathbb N\times  \mathbb N \to  \mathbb N$, def\/ined by
\begin{gather}
 n = \beta (n_1 , n_2 ) = \frac {(n_1 + n_2)(n_1 +n_2 +1)}2 + n_1.
\label{bijec}
\end{gather}
The inverse map $(n_1, n_2 ) = \beta^{-1} (n)$ is obtained by taking
\begin{gather*}
 L = \sup_{\ell\in \mathbb N}  \left\{\ell\colon \frac {\ell(\ell+1)}2 \leq n \right\}
 \end{gather*}
and then writing
\begin{gather*}
n_1 = n- \frac {L(L+1)}2 \qquad \text{and} \qquad  n_2 = L - n_1.
\end{gather*}
These successive relabelings of the point set ${\mathbb N}^2$ are }graphically described in Fig.~\ref{relabel} below.  They just illustrate the well-known countability of ${\mathbb N}^2$, or, equivalently, of the positive rational numbers.
\begin{figure}[htb!]
\centering
\includegraphics{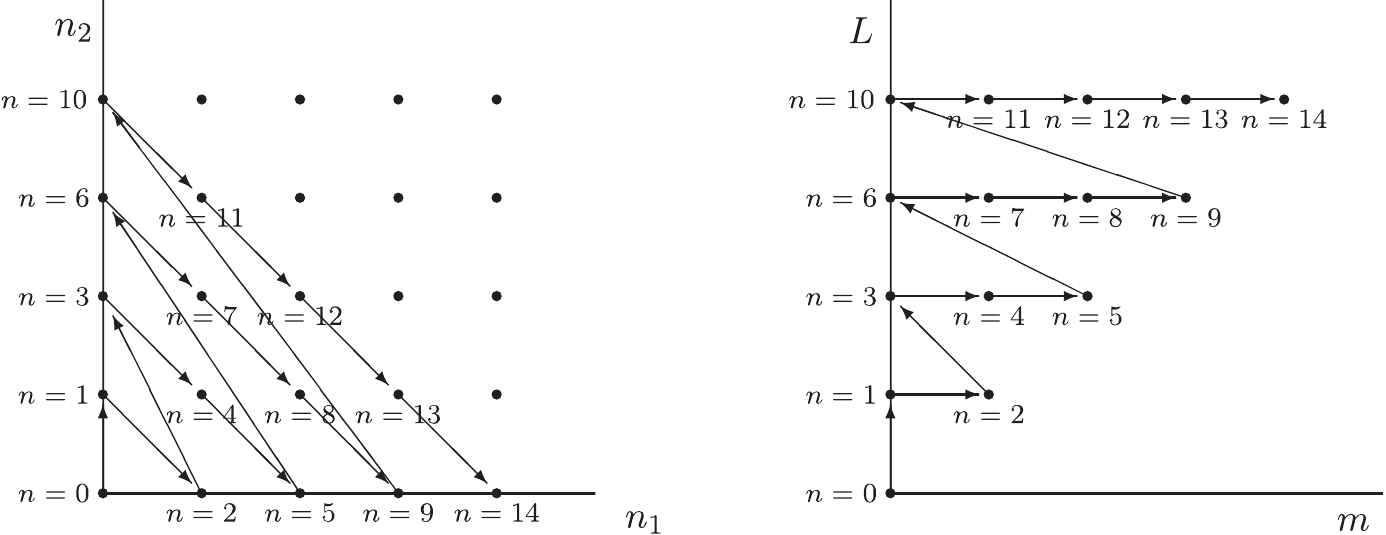}

 \caption{Three successive relabelings of the point set ${\mathbb N}^2$. On the left:  point set ${\mathbb N}^2$ of pairs $(n_1,n_2)$ with their corresponding non-negative label $n$. On the right $2d$-relabeling $(n_1,n_2) \mapsto (m,L)$ where $m=n_1$, $L=n_1+n_2$ and eventually  $1d$-relabeling $n= \frac{L(L+1)}{2} + m$ where dots in the sector $\{(m,L) , \, L\in {\mathbb N}  , \, m=0, 1, \dots , L\}$ are marked with their corresponding  non-negative label $n$.}
\label{relabel}
\end{figure}

We next def\/ine two bosonic operators $B$, $B^\dag$, in the standard manner, using the vectors $F_n$:
\begin{gather}
   BF_n = \sqrt{n}F_{n-1}, \qquad BF_0 = 0, \qquad B^\dag F_n = \sqrt{n+1} F_{n+1} ,
\qquad [B,B^\dag] = I,
\label{F-bos-op}
\end{gather}
and from (\ref{F-vects}) we f\/ind their actions on the vectors $f^L_m$:
\begin{gather}
 Bf^L_m   =  \begin{cases} \sqrt{\dfrac {L(L+1)}2 + m} f^L_{m-1}, & \text{if} \  m> 0,\vspace{1mm}\\
                           \sqrt{\dfrac {L(L+1)}2 } f^{L-1}_{L-1}, &\text{if}\  m =0 ,\end{cases} \nonumber\\
  B^\dag f^L_m  =  \begin{cases} \sqrt{\dfrac {L(L+1)}2 + m+1} f^L_{m+1}, & \text{if} \  m<L,\vspace{1mm}\\
                           \sqrt{\dfrac {(L+1)(L+2)}2 } f^{L+1}_0, &\text{if}\  m =L. \end{cases}
 \label{f-bos-op}
\end{gather}
This means that, writing the vectors $f^L_m$ and $F_n$ in ascending order
\begin{gather}
\label{mapfF}
\begin{array}{@{}c} f^0_0\\ \updownarrow\\F_0\end{array}, \underbrace{\begin{array}{c}f^1_0\\ \updownarrow \\ F_1\end{array}, \begin{array}{c}f^1_1\\ \updownarrow \\F_2\end{array}}, \underbrace{\begin{array}{c}f^2_0\\ \updownarrow \\ F_3\end{array}, \begin{array}{c}f^2_1\\ \updownarrow\\F_4\end{array}, \begin{array}{c}f^2_2\\ \updownarrow\\F_5\end{array}}, \underbrace{\begin{array}{c}f^3_0\\ \updownarrow\\F_6\end{array}, \begin{array}{c}f^3_1 \\ \updownarrow\\F_7\end{array}, \begin{array}{c}f^3_2\\ \updownarrow\\F_8\end{array}, \begin{array}{c}f^3_3\\ \updownarrow\\F_9\end{array}},\ldots  ,
\end{gather}
the operators $B^\dag$ and $B$  move them up and down this array, respectively (see right part of Fig.~\ref{relabel}).

As a direct consequence of the maps $(m,L) \mapsto n$ introduced in~\eqref{F-vects} and the above correspondence~\eqref{mapfF}, there is an interesting set of isometries $S_n$, $n = 0, 1, 2, \ldots , \infty$, of the Hilbert space $\mathcal H$ associated to the two sets of basis vectors~$\{F_n\}$ and $\{\varphi_{n,m}\}$. We def\/ine these operators as
\begin{gather}
S_n F_m = \varphi_{m,n} = F_{k(m,n)},
\label{isometries}
\end{gather}
where
\begin{gather*}
k(m,n) := \frac {(m+n)(m+n+1)}2 +m, \qquad n,m = 0,1,2, \ldots , \infty.
\end{gather*}
Clearly, $\Vert S_n\Vert = 1$, $n = 0,1,2, \ldots , \infty$. These operators were introduced in~\cite{abr}, where they were used to construct coherent states on $C^*$-Hilbert modules. The following properties are easily proved.

\begin{prop}\quad
\begin{enumerate}\itemsep=0pt
  \item[$(i)$]
The isometries $S_n$  are not unitary maps. Indeed, one has
\begin{gather*}
S_m^\dagger S_n =
\delta_{mn} I  \qquad \text{and} \qquad
S_n S_n^\dagger   =  \mathbb P_n ,
\end{gather*}
$\mathbb P_n$ being the projection
operator onto the subspace $\mathcal H_n$ of $\mathcal H$ spanned by the vectors
$\varphi_{m, n}$, $m = 0,1,2, \ldots , \infty$.
\item[$(ii)$] The kernel of $S_n^\dagger$ is the set of all vectors of the type $\varphi_{m,k}$, $m = 0, 1, 2, \ldots$,  and $k \neq n$.
\item[$(iii)$] $S_mS_n^\dagger$ is a partial isometry from~$\mathcal H_n$ to~$\mathcal H_m$.
\item[$(iv)$] The positive  operators $S_n S_n^\dagger$ resolve the identity
\begin{gather}
\sum_{n=0}^\infty S_n S_n^\dagger = I  ,
\label{isom2}
\end{gather}
the sum converging strongly.
  \item[$(v)$] There exist the following  relationships between  the operators $a_1$, $a_1^\dag$ in~\eqref{2dim-ccr} and the ope\-ra\-tors~$B$, $B^\dag$ in~\eqref{f-bos-op} through $S_n$, $S_n^\dagger$:
\begin{gather*}
  S_n^\dagger a_1 S_n = B, \qquad   S_n^\dagger a_1^\dag S_n = B^\dag,
\end{gather*}
while $S_n^\dagger a_2 S_n = S_n^\dagger a_2^\dag S_n = 0$.
  \end{enumerate}
  \end{prop}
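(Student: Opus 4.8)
The plan is to verify each of the five assertions directly from the definitions \eqref{F-vects}, \eqref{F-bos-op}, \eqref{isometries}, working throughout in the orthonormal basis $\{F_n\}$ (equivalently $\{\varphi_{m,n}\}$) so that everything reduces to bookkeeping with the pairing $k(m,n) = \beta(m,n)$.

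For $(i)$, I would first compute $S_n^\dagger$ explicitly: since $S_n F_m = F_{k(m,n)}$ maps the orthonormal basis $\{F_m\}_m$ isometrically onto the orthonormal subfamily $\{F_{k(m,n)}\}_m = \{\varphi_{m,n}\}_m$, its adjoint satisfies $S_n^\dagger F_{k(m,n)} = F_m$ and $S_n^\dagger F_j = 0$ whenever $j$ is not of the form $k(m,n)$, i.e.\ whenever the second $\beta^{-1}$-coordinate of $j$ differs from $n$. Then $S_m^\dagger S_n F_p = S_m^\dagger \varphi_{p,n}$; this is $\varphi_{p,m}$-preimage only when $n=m$ (using injectivity of $\beta$ in its second slot), giving $\delta_{mn} I$. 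Likewise $S_n S_n^\dagger$ annihilates every $\varphi_{m,k}$ with $k\neq n$ and fixes every $\varphi_{m,n}$, which is exactly the orthogonal projection $\mathbb{P}_n$ onto $\mathcal H_n = \overline{\operatorname{span}}\{\varphi_{m,n}: m\geq 0\}$; non-unitarity is then immediate since $\mathbb P_n \neq I$. Part $(ii)$ is just the description of $\ker S_n^\dagger$ already extracted above. For $(iii)$, one notes $S_m S_n^\dagger$ sends $\varphi_{p,n}\mapsto F_p \mapsto \varphi_{p,m}$ and kills $(\mathcal H_n)^\perp$, so $(S_mS_n^\dagger)^\dagger(S_mS_n^\dagger) = S_n S_m^\dagger S_m S_n^\dagger = S_n S_n^\dagger = \mathbb P_n$, the defining property of a partial isometry with initial space $\mathcal H_n$ and final space $\mathcal H_m$.

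For $(iv)$, since $\mathcal H = \oplus_{n\geq 0}\mathcal H_n$ — this is just the decomposition $\mathcal H = \oplus_L \mathcal H^L$ rewritten along the other axis, or simply the statement that $\{\varphi_{m,n}\}_{m,n}$ is an orthonormal basis — the partial sums $\sum_{n=0}^{N} S_n S_n^\dagger = \sum_{n=0}^N \mathbb P_n$ are the orthogonal projections onto $\oplus_{n=0}^N \mathcal H_n$, which converge strongly to $I$; I would spell out the strong (not norm) convergence explicitly on a generic basis vector. For $(v)$, I would compute on basis vectors: $S_n^\dagger a_1 S_n F_m = S_n^\dagger a_1 \varphi_{m,n} = \sqrt{m}\, S_n^\dagger \varphi_{m-1,n} = \sqrt{m}\, F_{m-1} = B F_m$, using the bosonic action \eqref{2dim-ccr} of $a_1$ and comparing with \eqref{F-bos-op}; the adjoint relation follows by taking daggers or by the analogous computation with $a_1^\dagger$. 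Finally $a_2 \varphi_{m,n} = \sqrt{n}\,\varphi_{m,n-1}$ lands in $\mathcal H_{n-1} \subseteq \ker S_n^\dagger$ (and $a_2^\dagger \varphi_{m,n} \in \mathcal H_{n+1} \subseteq \ker S_n^\dagger$), so those conjugates vanish.

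None of the steps presents a genuine obstacle; the only point requiring a little care is keeping the two roles of the index $n$ straight — the label $n$ of the operator $S_n$ versus the running basis index — and, correspondingly, being explicit that the identities hold on the dense domain where the relevant sums and the (possibly unbounded) operators $a_i^\sharp$ act, even though $S_n$ itself is bounded. The mild subtlety in $(iv)$ is to state the convergence in the strong operator topology only, since $\|\,I - \sum_{n=0}^N \mathbb P_n\,\| = 1$ for every $N$.
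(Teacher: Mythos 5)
Your proof is correct, and it follows the direct basis-vector verification that the paper clearly intends (the paper states only that these properties "are easily proved" and supplies no argument of its own). All five computations check out, including the identification of $\ker S_n^\dagger$, the partial-isometry identity $S_n S_m^\dagger S_m S_n^\dagger = \mathbb P_n$, the strong-but-not-norm convergence in $(iv)$, and the intertwining relations in $(v)$ via $a_1\varphi_{m,n}=\sqrt{m}\,\varphi_{m-1,n}$ and $a_2^\sharp\varphi_{m,n}\in\ker S_n^\dagger$.
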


The $S_n$ generate a $C^*$-algebra $\mathcal O_\infty$, known as a Cuntz algebra~\cite{Cuntz}, which is a subject of independent interest. Note also, that we have used here a very specif\/ic bijection~(\ref{bijec}) to def\/ine the vectors~$F_n$. Of course, there are many other possible bijections, which will also give rise to associated Cuntz algebras. But this particular one will be useful for our subsequent analysis.

\subsection{Deformed operators and bases}\label{def-op-basis}

To proceed further, let
\begin{gather*}
   g = \begin{pmatrix} g_{11} & g_{12} \\ g_{21} & g_{22} \end{pmatrix},
\end{gather*}
be an element of the GL$(2, \mathbb C)$ group (i.e., $g$ is a complex $2\times 2$ matrix with $\det [g] \neq 0$), using which we def\/ine the new operators
\begin{gather*}
 A^g_1  = \bar g_{11}a_1 + \bar g_{21}a_2 , \qquad  A^g_2  = \bar g_{12}a_1 + \bar g_{22}a_2 ,
\end{gather*}
and the corresponding adjoint operators  $A^{g \dag}_i$, $i=1,2$,
i.e., in matrix notations
\begin{gather*}
\begin{pmatrix}
  A^g_1       \\
    A^g_2
\end{pmatrix} \equiv \mathbf{A}^g= g^{\dag}\cdot \mathbf{a}  , \qquad \mathbf{a}:= \begin{pmatrix}
  a_1       \\
    a_2
\end{pmatrix}  , \qquad \begin{pmatrix}
  A^{g \dag}_1       \\
    A^{g \dag}_2
\end{pmatrix} \equiv \mathbf{A}^{g +}= {}^tg\cdot \mathbf{a}^+  , \qquad \mathbf{a}^+:= \begin{pmatrix}
  a^{\dag}_1       \\
    a^{\dag}_2
\end{pmatrix}   .
\end{gather*}
We call these operators {\em deformed bosonic operators}; they satisfy  $[A^g_1, A^g_2 ] = [A^{g\dag}_1, A^{g\dag}_2 ] = 0$, however, the other commutators are in general dif\/ferent from those of the undeformed operators $a_i$, $a_j$, $i=1,2$. Indeed, we have the general commutation relations
\begin{gather*}
  [A_i^g , A^{g \dag}_j] = \overline{g_{1i}} g_{1j} + \overline{g_{2i}} g_{2j}  , \qquad  i,j =1,2,
\end{gather*}
so that the matrix elements of  $g$ would have  to satisfy $\overline{g_{1i}}\; g_{1j} + \overline{g_{2i}} g_{2j} = \delta_{ij}$ (which is equivalent to having $g^\dag g = I_2$, i.e., a~unitary matrix) in order to recover the standard commutation rela\-tions~(\ref{2dim-ccr}). However we leave aside this condition, which is not relevant for us.

Using the operators $A^{g\dag}_i$, $i=1,2$, and noting that $A^g_i\varphi_{0,0} = 0$, we now construct a set of $g$-{\em deformed basis vectors} in a manner analogous to the construction of the vectors~$\varphi_{n_1, n_2}$ in~(\ref{stan-onb}). We def\/ine
\begin{gather}
 \varphi_{n_1,n_2}^g = \frac {(A^{g \dag}_1)^{n_1}  (A^{g \dag}_2)^{n_2}}{\sqrt{n_1!  n_2!}}\varphi_{0,0}, \qquad n_1, n_2 =0,1,2, \ldots , \infty   .
 \label{g-def-basis}
\end{gather}
Adopting the group representation notations 
 \eqref{monbas}, we rewrite \eqref{g-def-basis} as
  \begin{gather}
\label{g-def-basis1}
\varphi_{n_1,n_2}^g = e_{n_1,n_2}\big({}^t g\cdot\mathbf{a}^+\big) \varphi_{0,0}, \qquad  e_{n_1,n_2}(\mathbf{a}^+) := \frac{{a^{\dag}_1}^{n_1}{a^{\dag}_2}^{n_2}}{\sqrt{n_1! n_2!}}   .
\end{gather}
It is obvious that, in general, these vectors are not mutually orthogonal, since they are not eigenstates (with dif\/ferent eigenvalues) of some self-adjoint operator. To continue, for each~$L\geq 0$ let us def\/ine the set of~$L+1$ vectors~$f^{g, L}_m$ in a manner analogous to~(\ref{L-subsp-vect})
\begin{gather}
     f^{g,L}_m = e_{m,L-m}({}^tg\cdot\mathbf{a}^+)\varphi_{0,0} = \varphi_{m, L-m}^g, \qquad m =0,1,2, \ldots, L,
 \label{def-L-subsp-vect}
 \end{gather}
It is clear that these vectors are linear combinations of the $f^L_m$, hence they also span the subspace~$\mathcal H^L$ of $\mathcal H$. This is simply due to the ${\rm GL}(2,{\mathbb C})$ representation operator def\/ined in~\eqref{repglpol} with $L=s$, and acting in the present context as the map~\cite{bsali}  $\mathcal T^L (g)\colon \mathcal H^L \to \mathcal H^L$ for which
\begin{gather}
   \mathcal T^L (g) f^L_m = f^{g,L}_m, \qquad m = 0,1,2, \ldots, L, \qquad g \in \mathrm{GL}(2,\mathbb C).
\label{rep-op}
\end{gather}
 The matrix elements of the operators $\mathcal T^L (g)$ in the basis~\eqref{g-def-basis1} are given in~\eqref{matelgamapp}.
 In the~$f^L_m$ basis they read as~\cite{bsali}
\begin{gather}
\mathcal T_{m^{\prime}m}^L (g) = \sum_{q}\binom{m}{q}\binom{L-m}{m^{\prime}-q} g_{11}^{q}g_{21}^{m-q}g_{12}^{m^{\prime}-q}g_{22}^{L-m+q-m^{\prime}}, \qquad  0\leq m^{\prime},m \leq L  .
\label{irrep-mat-elem}
\end{gather}

The range of values assumed by $q$ in the above sum is determined by the cancellation of the  binomial coef\/f\/icients involved, i.e.,  $\max\{0,m^{\prime}+m-L\}\leq q \leq\min\{m^{\prime},m\}$.
These matrix elements  are discussed in greater detail in  Section~\ref{normest}.

\subsection{Biorthogonal families of vectors and pseudobosons}\label{subsec-biorthog}

 Corresponding to the vectors $f^{g,L}_m$, let us def\/ine a dual family of vectors $\widetilde{f}^{g,L}_m$ by the relation
\begin{gather}
  \widetilde{f}^{g,L}_m  = \mathcal T^L(\widetilde{g})f^L_m = f^{\widetilde{g},L}_m , \qquad \widetilde{g} := (g^\dag)^{-1} .
\label{dual-vects}
\end{gather}
Clearly these vectors are also elements of the subspace $\mathcal H^L$. From (\ref{rep-op}), (\ref{star-rep}) and the representation theoretical property of $\mathcal T^L (g)$, by which  $\mathcal T^L (g^{-1}) = (\mathcal T^L (g))^{-1}$, we see that
\begin{gather*}
  \big\langle  \widetilde{f}^{g,L}_m , f^{g,M}_n \big\rangle = \delta_{LM} \delta_{mn} .
\end{gather*}
This means that on each subspace $\mathcal H^L$ the vectors $f^{g,L}_m$ and $\widetilde{f}^{g,L}_m$ form two {\em biorthogonal bases}, while they are, in general, biorthogonal sets in~${\mathcal H}$.

Consider now the operator $\mathcal T (g) = \oplus_{L=0}^\infty \mathcal T^L (g)$. This operator is in general unbounded and densely def\/ined in~$\mathcal H$, since~$\mathcal T^L (g)$ is bounded on each subspace~$\mathcal H^L$. In particular $\mathcal T(g)$ is well def\/ined on the vectors $F_n$ in~(\ref{F-vects}). We thus def\/ine the two sets of vectors
\begin{gather*}
F^g_n = \mathcal T (g) F_n, \qquad \text{and} \qquad \widetilde{F}^g_n  = \mathcal T(\widetilde{g})F_n = F^{\widetilde g}_n, \qquad  n=0,1, \ldots , \infty,
\end{gather*}
in duality, for which
\begin{gather*}
   \big\langle \widetilde{F}^g_m, F^g_n \big\rangle = \delta_{mn} .
\end{gather*}
Note that the existence of the inverse operator $(\mathcal T(g))^{-1}$, as a densely def\/ined operator on~$\mathcal H$ is guaranteed by the  property $(\mathcal T^L(g))^{-1} = \mathcal T^L(g^{-1})$ on each subspace~$\mathcal H^L$.

It is now possible to construct families of pseudobosons using the vectors~$F^g_n $ and~$\widetilde{F}^g_n$.
The following
proposition is easily derived from the above material.

\begin{prop}
Given the operators $B$, $B^\dag$ in \eqref{F-bos-op}, for any $g \in {\rm GL}(2, {\mathbb C})$ let us define  the deformed operators
\begin{gather*}
  B(g) = \mathcal T(g) B (\mathcal T (g))^{-1}, \qquad \widetilde B (g) = B(\widetilde{g}) ,
\end{gather*}
and their adjoints $B(g)^\dag$, $\widetilde B (g)^\dag$. Then, as operators on the full Hilbert space $\mathcal H$, they satisfy, at least formally, the pseudo-bosonic commutation relations
\begin{gather*}
  \big[B(g) , \widetilde B (g)^\dag\big] = \big[\widetilde B (g), B(g)^\dag\big] = I.
\end{gather*}
Their actions on the vectors $F^g_n$, $\widetilde{F}^g_n$ read as
\begin{alignat*}{3}
& B(g)F^g_n = \sqrt{n} F^g_{n-1} , \qquad && B(g)^\dag \widetilde F^g_n   =   \sqrt{n+1}\widetilde F^g_{n+1} ,& \\
& \widetilde B(g)\widetilde F^g_n = \sqrt{n} \widetilde F^g_{n-1} , \qquad && \widetilde B(g)^\dag F^g_n   =   \sqrt{n+1} F^g_{n+1} .&
\end{alignat*}
\end{prop}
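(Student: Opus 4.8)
The plan is to deduce everything from the conjugation $B(g) = \mathcal T(g)B(\mathcal T(g))^{-1}$ together with the already-established action of $B,B^\dag$ on the $F_n$ in \eqref{F-bos-op} and the defining relations $F^g_n = \mathcal T(g)F_n$, $\widetilde F^g_n = \mathcal T(\widetilde g)F_n$. First I would verify the lowering/raising formulas. Applying $B(g)$ to $F^g_n$ gives $\mathcal T(g)B(\mathcal T(g))^{-1}\mathcal T(g)F_n = \mathcal T(g)BF_n = \sqrt{n}\,\mathcal T(g)F_{n-1} = \sqrt{n}\,F^g_{n-1}$, using \eqref{F-bos-op}. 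For $\widetilde B(g) = B(\widetilde g) = \mathcal T(\widetilde g)B(\mathcal T(\widetilde g))^{-1}$ the identical computation on $\widetilde F^g_n = \mathcal T(\widetilde g)F_n$ yields $\widetilde B(g)\widetilde F^g_n = \sqrt{n}\,\widetilde F^g_{n-1}$. The two adjoint relations are the slightly more delicate part: one wants $B(g)^\dag\widetilde F^g_n = \sqrt{n+1}\,\widetilde F^g_{n+1}$ and $\widetilde B(g)^\dag F^g_n = \sqrt{n+1}\,F^g_{n+1}$. Here I would use the key identity $(\mathcal T(g)^{-1})^\dag = \mathcal T(\widetilde g)$, which follows on each $\mathcal H^L$ from $\widetilde g = (g^\dag)^{-1}$ and the finite-dimensional identity $\mathcal T^L(g)^\dag = \mathcal T^L(\bar g^{\,t})$ type relation (equivalently, from \eqref{star-rep} cited above). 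Then $B(g)^\dag = (\mathcal T(g)^{-1})^\dag B^\dag \mathcal T(g)^\dag = \mathcal T(\widetilde g)B^\dag \mathcal T(\widetilde g)^{-1}$ on the appropriate dense domain, so $B(g)^\dag\widetilde F^g_n = \mathcal T(\widetilde g)B^\dag F_n = \sqrt{n+1}\,\mathcal T(\widetilde g)F_{n+1} = \sqrt{n+1}\,\widetilde F^g_{n+1}$, and symmetrically for $\widetilde B(g)^\dag$ with $g\leftrightarrow\widetilde g$ (using $\widetilde{\widetilde g} = g$).

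Next I would establish the commutation relations $[B(g),\widetilde B(g)^\dag] = [\widetilde B(g),B(g)^\dag] = I$ at the formal/weak level claimed. The cleanest route is to use the four lowering/raising relations just proved, plus the fact that $\{F^g_n\}$ and $\{\widetilde F^g_n\}$ are biorthogonal with $\langle \widetilde F^g_m, F^g_n\rangle = \delta_{mn}$. Acting on $F^g_n$: $\widetilde B(g)^\dag$ sends it to $\sqrt{n+1}\,F^g_{n+1}$, then $B(g)$ sends that to $\sqrt{n+1}\cdot\sqrt{n+1}\,F^g_n = (n+1)F^g_n$; conversely $B(g)F^g_n = \sqrt{n}\,F^g_{n-1}$ and then $\widetilde B(g)^\dag$ gives $\sqrt{n}\cdot\sqrt{n}\,F^g_n = n\,F^g_n$ (with the $n=0$ term vanishing). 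Subtracting, $[B(g),\widetilde B(g)^\dag]F^g_n = F^g_n$. Since $\{F^g_n\}$ is total in $\mathcal H$ (it is $\mathcal T(g)$ applied to an orthonormal basis, hence a basis), this pins down the commutator as $I$ on the dense span of the $F^g_n$. The same argument with $g\leftrightarrow\widetilde g$ gives $[\widetilde B(g),B(g)^\dag]F^{\widetilde g}_n = F^{\widetilde g}_n$, i.e.\ $[\widetilde B(g),B(g)^\dag] = I$.

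The main obstacle, and the reason for the hedge ``at least formally'' in the statement, is domain bookkeeping: $\mathcal T(g)$ is unbounded, so products like $\mathcal T(g)B(\mathcal T(g))^{-1}\mathcal T(g)F_n$ and the adjoint manipulation $(\mathcal T(g)B(\mathcal T(g))^{-1})^\dag = (\mathcal T(g)^{-1})^\dag B^\dag\mathcal T(g)^\dag$ are only legitimate on a carefully chosen dense domain — here the finite linear span $\mathcal D_g$ of the $F^g_n$ (and of the $\widetilde F^g_n$), on which everything is a finite sum and all operators preserve the relevant $\mathcal H^L$-blocks. I would therefore state at the outset that all identities are understood on $\mathcal D_g$ (respectively $\mathcal D_{\widetilde g}$), note that each $F^g_n$, $\widetilde F^g_n$ lies in this domain together with its images under $B(g)^\sharp$, $\widetilde B(g)^\sharp$ because $\mathcal T^L(g)$ is a bounded bijection of the finite-dimensional $\mathcal H^L$, and only then run the computations above. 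Modulo that care, the proof is a direct transport of the flat $(B,B^\dag)$ relations \eqref{F-bos-op} through the similarity $\mathcal T(g)$, with the adjoint side transported through $\mathcal T(\widetilde g)$ via $(\mathcal T(g)^{-1})^\dag = \mathcal T(\widetilde g)$.
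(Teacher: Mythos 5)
Your proof is correct and follows exactly the route the paper intends: the paper offers no explicit argument (it only remarks that the proposition ``is easily derived from the above material''), and the intended derivation is precisely your transport of the flat relations \eqref{F-bos-op} through the similarity $\mathcal T(g)$, with the adjoint side handled via $(\mathcal T(g)^{-1})^\dag = \mathcal T(\widetilde g)$ coming from \eqref{star-rep}, and the commutators obtained either by conjugating $[B,B^\dag]=I$ or, as you do, by acting on the $F^g_n$. One parenthetical should be corrected: the family $\{F^g_n\}$ is \emph{not} in general a basis for $\mathcal H$ --- Section~5 of the paper shows precisely that Assumption ${\mathcal D}$-pb~2.7 fails for non-diagonal $g$ --- but your argument only needs the linear span of the $F^g_n$ to be dense, which does hold because each $\mathcal T^L(g)$ is an invertible map of the finite-dimensional block $\mathcal H^L$, so the algebraic span of the $F^g_n$ coincides with that of the $F_n$. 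With that wording fixed, the proof is complete and consistent with the ``at least formally'' qualification in the statement.
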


Notice that, all throughout this section, $g$ is a f\/ixed element in ${\rm GL}(2, {\mathbb C})$. This is important since, if we take
$g_1, g_2\in {\rm GL}(2, {\mathbb C})$, with $g_1\neq g_2$, then nothing can be said about $[B(g_1) , \widetilde B (g_2)^\dag]$, for instance.

To relate the equations above with the general structure discussed in Section~\ref{mathpb}, we start by observing that $B(g)F^g_0 =0= \widetilde B(g)\widetilde F^g_0$. This shows that the two non zero vacua required in Assumptions~${\mathcal D}$-pb~\ref{Assumption1} and ${\mathcal D}$-pb~\ref{Assumption2} of Section~\ref{mathpb} do exist and coincide\footnote{Here $B(g)$ and $\widetilde B(g)$ respectively play the role of $a$ and $b$ in~Section \ref{mathpb}.}. In fact $F^g_0 = \widetilde F^g_0 =\varphi_{0,0}$. Moreover, calling ${\mathcal D}$ the linear span of the vectors $\varphi_{n_1,n_2}$ in~(\ref{stan-onb}), it is clear that (i)~$F^g_0$, $\widetilde F^g_0\in {\mathcal D}$, (ii)~that~${\mathcal D}$ is dense in ${\mathcal H}$ and (iii)~${\mathcal D}$ is left invariant by $B(g)$, $\widetilde B(g)$ and by their adjoints. In fact these operators map each f\/inite linear combination of the $\varphi_{n_1,n_2}$'s into a dif\/ferent, but still f\/inite, linear combination of the same vectors. Thus we conclude that the present setup ref\/lects, at least in part, what was discussed in Section \ref{mathpb}. However, in Section~\ref{normest} we will show that Assumption~${\mathcal D}$-pb~\ref{Assumption3} is not satisf\/ied, while its weaker version Assumption~${\mathcal D}$-pbw~\ref{Assumption4} holds true.

\section{Deformed complex Hermite polynomials}\label{hpolpb}

In this section we give a concrete realization of the kind of pseudo-bosons discussed above. Let us consider the
irreducible representation of the operators $a_i$, $ a_i^\dag$, $i =1,2$, on the Hilbert space $\mathcal H({\mathbb C})={\mathcal L}^2({\mathbb C}, \mathrm{d}\nu(z,\overline z))$, where
\begin{gather*}
\mathrm{d}\nu(z,\overline z)=e^{-|z|^2}\frac{\mathrm{d} z\wedge \mathrm{d}\overline{z}}{2\pi i}=\frac{1}{\pi}e^{-(x^2+y^2)}\mathrm{d} x \mathrm{d} y , \qquad z=x+iy ,
\end{gather*}
where they are realized as follows
\begin{gather}
a_1=\partial_z,\quad a_1^\dagger=z-\partial_{\overline z},\qquad a_2=\partial_{\overline z},\quad a_2^\dagger={\overline z}-\partial_{ z}  .
\label{compl=rep}
\end{gather}
The basis vectors $\varphi_{n_1, n_2}$, given in~(\ref{stan-onb}), now turn out to be the normalized {\em complex Hermite polynomials} in the variables $z$, $\overline{z}$, which we shall denote by $h_{n_1, n_2}({\mathfrak{z}})$, where we adopt the vector notation for group theoretical reasons
\begin{gather*}
{\mathfrak{z}}:= \begin{pmatrix}
      z   \\
      \bar z
\end{pmatrix}.
\end{gather*}
The normalized vacuum state $\varphi_{0,0}$, satisfying $a_i \varphi_{0,0} = 0$, $i=1,2,$ is simply the constant function $h_{0,0}({\mathfrak{z}})=1$. These polynomials have been discussed extensively in the literature (see, for example, \cite{albaga12,cotgazgor10,Gha,Ito}, and very recently in~\cite{ghanmi15, Ism1,ismail15,Ism2,Ism3,Ism4}). Their expression can be directly inferred from~\eqref{stan-onb}
\begin{gather}
\nonumber
   h_{n_1,n_2}({\mathfrak{z}}) =\frac {(z-\partial_{\overline z})^{n_1}  ({\overline z}-\partial_{ z})^{n_2}}{\sqrt{n_1!  n_2!}} h_{0,0}\\
\hphantom{h_{n_1,n_2}({\mathfrak{z}}) =}{}
   =\frac 1{\sqrt{n_1! n_2!}}\sum_{k=0}^{\min(n_1,n_2)}(-1)^k k!{n_1\choose k}{n_2\choose k}z^{n_1-k} \overline{z}^{n_2-k}  .
\label{comp-herm-poly}
\end{gather}

 Alternatively, they can also be obtained from the expression
\begin{gather}
h_{n_1,n_2}({\mathfrak{z}})  =  e^{-\partial_z \partial_{ \overline{z}}}  \frac{z^{n_1} \overline{z}^{n_2}}{\sqrt{n_1! n_2!}}= e^{-\partial_z \partial_{ \overline{z}}} e_{n_1,n_2}(\mathbf{\mathfrak{z}})  .
\label{comp-herm-poly2}
\end{gather}
Note that these complex Hermite polynomials are of particular interest in the study of physical systems constituted by several layers. Indeed, such systems can be modeled by spaces of polyanalytic functions generated by complex Hermite polynomials. This has recently found several applications in signal analysis \cite{abreu10,abreu12,grochenig09} and in the statistics of higher Landau levels~\cite{haimi13}.

The $g$-deformed basis vectors $\varphi^g_{n_1,n_2}$ in (\ref{g-def-basis}), which we now denote by $h^g_{n_1,n_2}$, are also polynomials in~$z$, $\overline{z}$, which are linear combinations of the $h_{n_1,n_2}$~\cite{alimourshah,bsali,Wun}. Within the ${\rm GL}(2,{\mathbb C})$ representation framework, they are obtainable from a~formula analogous to~(\ref{g-def-basis1})
\begin{gather}
\label{defhLnn}
h^{g,L}_{n_1,n_2}({\mathfrak{z}})  = e^{-\partial_z \partial_{ \overline{z}}} e_{n_1,n_2}\big({}^tg\cdot{\mathfrak{z}}\big) := e^{-\partial_z \partial_{ \overline{z}}}\big(\mathcal{T}^L(g) e_{n_1,n_2}\big)({\mathfrak{z}})  , \qquad L= n_1 + n_2 .
\end{gather}
Similarly, with the notation introduced in~\eqref{dual-vects}, we def\/ine the dual polynomials
\begin{gather}
\label{defThLnn}
\widetilde{h}^{g,L}_{n_1,n_2}({\mathfrak{z}})  = h^{\widetilde{g},L}_{n_1,n_2}({\mathfrak{z}})  =e^{-\partial_z \partial_{ \overline{z}}} e_{n_1,n_2}\big({}^t\widetilde{g}\cdot{\mathfrak{z}}\big) := e^{-\partial_z \partial_{ \overline{z}}} \big(\mathcal{T}^L(\widetilde{g}\big) e_{n_1,n_2})({\mathfrak{z}})  .
\end{gather}
We derive from  \eqref{comp-herm-poly2} and the def\/inition given in~\eqref{defmatel} of the matrix elements of the representation operator~$\mathcal T^L (g)$, the following expansions in which the apparent double summation is actually reduced a single summation because of the restriction $n_1+n_2 = L= n^{\prime}_1 + n^{\prime}_2$
\begin{gather*}
\begin{split}
& h^{g,L}_{n_1,n_2}({\mathfrak{z}})   = \sum_{n'_1,n'_2=L-n^{\prime}_1}\mathcal{T}^L_{n'_1,n'_2;n_1,n_2}(g)  h_{n^{\prime}_1,n^{\prime}_2}({\mathfrak{z}})  ,\\
& \widetilde{h}^{g,L}_{n_1,n_2}({\mathfrak{z}})  = \sum_{n'_1,n'_2=L-n^{\prime}_1}\mathcal{T}^L_{n'_1,n'_2;n_1,n_2}(\widetilde{g})  h_{n^{\prime}_1,n^{\prime}_2}({\mathfrak{z}})  .
\end{split}
\end{gather*}
Similarly, writing now $h^L_m ({\mathfrak{z}})$ for the relabeled vectors~$f^L_m$ in~(\ref{L-subsp-vect}) and~$h^{g,L}_m$ for the $f^{g,L}_m$ in~(\ref{def-L-subsp-vect}) and using~(\ref{rep-op}) and~(\ref{irrep-mat-elem}) we get
\begin{gather*}
  h^{g,L}_m ({\mathfrak{z}}) = \sum_{m^{\prime}=0}^L \mathcal T^L (g)_{m^{\prime}m} h^L_{m^{\prime}} ({\mathfrak{z}})   , \qquad
  h^{\widetilde{g},L}_m ({\mathfrak{z}}) = \sum_{m^{\prime}=0}^L \mathcal T^L (\widetilde{g})_{m^{\prime}m}\;h^L_{m^{\prime}} ({\mathfrak{z}}).
\end{gather*}

 We refer to the polynomials $h^{g,L}_m ({\mathfrak{z}})$ as {\em deformed complex Hermite polynomials.} It is now a routine matter to go over to a basis $H_n$, $n=0,1,2, \ldots , \infty,$ which would be the analogous relabeling of the $h^L_m$ as the $F_n$ in~(\ref{F-vects}) are the relabeled versions of the $f^L_m$. Similarly we may def\/ine the deformed polynomials $H^g_n ({\mathfrak{z}})$ and $\widetilde{H}^g_n ({\mathfrak{z}}) = H^{\widetilde{g}}_n ({\mathfrak{z}})$. The biorthonormality of these polynomials is expressed via the integral relation
\begin{gather*}
  \int_{\mathbb C} \overline{\widetilde{H}^g_n ({\mathfrak{z}})} H^g_{n^{\prime}} ({\mathfrak{z}})  d\nu(z,\overline z) = \delta_{nn^{\prime}},
\end{gather*}
which then are the pseudo-bosonic complex polynomial states.

Before leaving this section let us note that for f\/ixed $n$, the polynomials $h_{m,n}({\mathfrak{z}})$, $m = 0,1,2, \ldots , \infty$ (see~(\ref{comp-herm-poly})) are {\em polyanalytic functions} of order $n$ (see, for example~\cite{abfeich}).
More precisely, the subspace~$\mathcal H_{n}$ of $\mathcal H (\mathbb C )$, consisting of such polyanalytic functions spanned by the complex Hermite polynomials with a f\/ixed degree is the $n$th polyanalytic sector (corresponding to the $n$th Landau level, also called a~true~\cite{vasilevski97-00} or pure~\cite{haimi13} polyanalytic space) and the direct sum of the f\/irst $k$ such spaces is the $k$th polyanalytic space. A detailed proof of this fact, also valid for Banach spaces, can be found in~\cite{abreu12}. It should also be mentioned that the decomposition of $\mathcal H (\mathbb C )$ into true-polyanalytic spaces was f\/irst introduced by Vasilevski in~\cite{vasilevski97-00}, where the action of the operators in~\eqref{compl=rep} was explored.

Since by~(\ref{isometries}) $S_n H_m = h_{m,n}$, the isometry~$S_n$ maps the whole Hilbert space $\mathcal H (\mathbb C)$ to  its $n$th polyanalytic sector and~(\ref{isom2}) is then the statement that $\mathcal H (\mathbb C)$ decomposes into an orthogonal direct sum of polyanalytic subspaces. (Note that a function $f(z, \overline{z})$ is polyanalytic of order~$n$ if $\frac{\partial^{n+1} f}{\partial \overline{z}^{n+1}} = 0$). Such functions have also found much use recently in signal analysis.

\section{Norm estimates for biorthogonal families of polynomials}
\label{normest}

In this section, we take advantage of the group representation properties and of the orthonor\-ma\-li\-ty of the complex Hermite polynomials to estimate the respective norms of these new (deformed) vectors. This is useful in determining whether the sets $\{h^{g,L}_{n_1,n_2}\}$ and $\{h^{\widetilde g,L}_{n_1,n_2}\}$ constitute bases in the Hilbert space $\mathcal H({\mathbb C})={\mathcal L}^2({{\mathbb C}}, \mathrm{d}\nu(z,\overline z))$ or not. (Some similar, though less sharp, estimates were given in~\cite{bsali}.)

\begin{prop}
Let $g\in \mathrm{GL}(2,{\mathbb C})$ and $\big\{h^{g,L}_{n_1,n_2}\big\}$ and $\big\{h^{\widetilde g,L}_{n_1,n_2}\big\}$ be the deformed complex Hermite polynomials defined in \eqref{defhLnn} and~\eqref{defThLnn} respectively.
\begin{enumerate}\itemsep=0pt
  \item[$(i)$] We have the following upper bounds for their respective norms
\begin{gather}
\label{upboundphipsi}
\big\Vert h^{g,L}_{n_1,n_2}\big\Vert^2 \leq
   \big(\operatorname{tr} g^{\dag} g\big)^{(n_1+n_2)/2}  , \qquad \big\Vert \widetilde{h}^{g,L}_{n_1,n_2} \big\Vert^2 \leq
   \big(\operatorname{tr} \big(g^{\dag} g\big)^{-1}\big)^{(n_1+n_2)/2}  .
\end{gather}
\item[$(ii)$] More precisely
\begin{gather}
\label{lowupboundMT1}
  \frac{a^{n_1}d^{L-n_1}}{\sqrt{\pi \min(n_1,L-n_1)}}\leq \big\Vert h^{g,L}_{n_1,n_2}\big\Vert^2 \leq \binom{L}{n_1} a^{n_1}  d^{L-n_1}   ,\\
\label{lowupboundMT2}   \frac{1}{\vert \det g\vert^{2L}}\frac{d^{n_1}a^{L-n_1}}{\sqrt{\pi \min(n_1,L-n_1)}}\leq \big\Vert \widetilde{h}^{g,L}_{n_1,n_2}\big\Vert^2 \leq\frac{1}{\vert \det g\vert^{2L}} \binom{L}{n_1}   d^{n_1}a^{L-n_1}    ,
\end{gather}
where we have introduced the notations
\begin{gather*}
g^{\dag}g \equiv \begin{pmatrix}
 a     &   b \\
 \bar b     & d
\end{pmatrix}\quad  \Rightarrow \quad \big(g^{\dag}g\big)^{-1}= \frac{1}{\vert \det g\vert^2}\begin{pmatrix}
 d    &  - b \\
 - \bar b     & a
\end{pmatrix}  ,  \qquad \vert \det g\vert^2 = ad- \vert b\vert^2  ,
\end{gather*}
for the positive invertible matrix $g^{\dag}g$.
\end{enumerate}
\end{prop}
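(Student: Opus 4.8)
The plan is to exploit the two facts established in Section~\ref{def-op-basis}: that $h^{g,L}_{n_1,n_2}$ is the image of the complex Hermite polynomial $h^L_m$ (with $m=n_1$, $L=n_1+n_2$) under the representation operator $\mathcal T^L(g)$, and that this operator acts on the exponentials $e_{n_1,n_2}(\mathfrak z)$ by the $\mathrm{GL}(2,\mathbb C)$-action $e_{n_1,n_2}\mapsto e_{n_1,n_2}({}^tg\cdot\mathfrak z)$ before the smoothing $e^{-\partial_z\partial_{\bar z}}$. The key observation is that the norm $\Vert h^{g,L}_{n_1,n_2}\Vert^2$ in $\mathcal L^2(\mathbb C,\mathrm d\nu)$ can be computed, via \eqref{comp-herm-poly2} and the orthonormality of the $h_{n'_1,n'_2}$, from the squared $\ell^2$-norm of the coefficient vector $\big(\mathcal T^L_{n'_1,n'_2;n_1,n_2}(g)\big)_{n'_1}$, i.e.\ $\Vert h^{g,L}_{n_1,n_2}\Vert^2 = \sum_{m'=0}^L |\mathcal T^L_{m'm}(g)|^2$ with $m=n_1$.

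For part $(i)$, I would first write $\big(g^t g^\dagger\big)$ — equivalently note that $\mathcal T^L$ restricted to the ``$L$-particle'' symmetric tensors is (up to the combinatorial normalization built into $e_{n_1,n_2}$) the $L$-th symmetric power of the defining representation. Hence $\mathcal T^L(g)\,\mathcal T^L(g)^\dagger = \mathcal T^L(gg^\dagger)$ as operators on $\mathcal H^L$, and $\sum_{m'}|\mathcal T^L_{m'm}(g)|^2 = \big\langle f^L_m,\mathcal T^L(gg^\dagger)f^L_m\big\rangle$. Since $gg^\dagger$ is positive with eigenvalues bounded by $\operatorname{tr}(g^\dagger g)$, and $\mathcal T^L$ of a positive diagonalizable matrix with eigenvalues $\lambda_1,\lambda_2$ has operator norm $\max(\lambda_1,\lambda_2)^L \le (\operatorname{tr} g^\dagger g)^L$ — careful here, one must check whether the normalized basis gives operator norm $\max(\lambda_i)^L$ or $(\lambda_1+\lambda_2)^L$; for the \emph{orthonormal} $f^L_m$ basis the symmetric-power structure does give the clean bound $\max(\lambda_i)^L$. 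This yields $\Vert h^{g,L}_{n_1,n_2}\Vert^2 \le (\operatorname{tr} g^\dagger g)^{L/2}$ after taking square roots in the right place, and the bound for $\widetilde h$ follows by replacing $g$ with $\widetilde g=(g^\dagger)^{-1}$ and using $\widetilde g^\dagger\widetilde g=(g^\dagger g)^{-1}$.

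For part $(ii)$, I would diagonalize without loss of generality: since the norm depends on $g$ only through $g^\dagger g$, and the upper/lower bounds are stated in terms of the entries $a,b,d$ of $g^\dagger g$, I would first reduce to the case where we can explicitly compute $\sum_{m'}|\mathcal T^L_{m'm}(g)|^2$ using the explicit formula \eqref{irrep-mat-elem}. The upper bound $\binom{L}{n_1}a^{n_1}d^{L-n_1}$ should come from discarding the off-diagonal entry $b$: expanding $\mathcal T^L(g)\mathcal T^L(g)^\dagger = \mathcal T^L(g^\dagger g)$ and bounding the contribution of $b$-terms by positivity, so that only the ``diagonal'' term $\binom{L}{m}a^m d^{L-m}$ survives as an upper estimate (this is essentially $\langle f^L_m,\mathcal T^L(\operatorname{diag}(a,d))f^L_m\rangle$ computed from the symmetric-power action, which gives exactly $a^m d^{L-m}$ times the binomial from the normalization of $e_{m,L-m}$ versus $f^L_m$). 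For the \emph{lower} bound, I would retain a single term in the sum $\sum_{m'}|\mathcal T^L_{m'm}(g)|^2 \ge |\mathcal T^L_{mm}(g)|^2$ or a cleverly chosen $m'$; the $1/\sqrt{\pi\min(n_1,L-n_1)}$ factor strongly suggests estimating one central binomial coefficient via Stirling, i.e.\ $\binom{L}{n_1}\ge \binom{L}{n_1}/\sqrt{\text{something}}$ type bound, or more likely bounding $\sum_{m'}|\mathcal T^L_{m'm}|^2$ from below by $a^{n_1}d^{L-n_1}$ times the largest binomial-squared term divided by the number of terms $\le L+1$, then using $\binom{L}{n_1}^2/(L+1)\ge\binom{L}{n_1}/\sqrt{\pi\min(n_1,L-n_1)}$ via a Stirling-type estimate on the central binomial. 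The $1/|\det g|^{2L}$ in \eqref{lowupboundMT2} is just $\det(g^\dagger g)^{-L}=((ad-|b|^2)^{-1})^L$ appearing because $\widetilde g^\dagger\widetilde g=(g^\dagger g)^{-1}$ has entries $d/|\det g|^2$, $-b/|\det g|^2$, $a/|\det g|^2$.

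\textbf{Main obstacle.}
The genuine difficulty is the \emph{lower} bound and pinning down the constant $1/\sqrt{\pi\min(n_1,L-n_1)}$: the upper bounds are soft consequences of positivity and the symmetric-power structure, but the lower bound requires either a sharp combinatorial identity for $\sum_{m'}|\mathcal T^L_{m'm}(g)|^2$ (which, because of cross terms in $b$, is not simply $\binom{L}{m}^2a^{2m}d^{2(L-m)}$ unless $b=0$) together with a Stirling estimate, or an argument that reduces to the $b=0$ case by a monotonicity/convexity argument in $|b|^2$. One must check that the squared matrix element sum is minimized (over the phase of $b$, or monotone in $|b|$) at $b=0$, which is plausible from the positivity of $g^\dagger g$ but needs the explicit range $\max\{0,m'+m-L\}\le q\le\min\{m',m\}$ of the $q$-summation in \eqref{irrep-mat-elem} to be handled carefully. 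I expect the cleanest route is: compute $\sum_{m'}|\mathcal T^L_{m'm}(g)|^2$ exactly as a hypergeometric-type sum in $a,d,|b|^2$, recognize it (via $\mathcal T^L(g)\mathcal T^L(g)^\dagger=\mathcal T^L(g^\dagger g)$ and the explicit form \eqref{irrep-mat-elem} applied to $g^\dagger g$) as $\mathcal T^L(g^\dagger g)_{mm}=\binom{L}{m}\sum_q\binom{m}{q}\binom{L-m}{m-q}a^q d^{m-q}|b|^{2(m-q)}\cdots$ — wait, more precisely it is the diagonal matrix element of $\mathcal T^L$ of the positive matrix $g^\dagger g$ — and then bound this single explicit positive sum above by its ``all-$a,d$'' term (drop $|b|^2\ge 0$ contributions appropriately) and below by its largest term, invoking Stirling for the central binomial to extract $1/\sqrt{\pi\min(n_1,L-n_1)}$.
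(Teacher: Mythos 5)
Your reduction is the same as the paper's: expand $h^{g,L}_{n_1,n_2}=\sum_{n'_1,n'_2}\mathcal T^L_{n'_1,n'_2;n_1,n_2}(g)\,h_{n'_1,n'_2}$, use orthonormality of the $h_{n'_1,n'_2}$ to write $\Vert h^{g,L}_{n_1,n_2}\Vert^2=\sum_{n'}|\mathcal T^L_{n';n}(g)|^2$, and use the homomorphism and adjoint properties of $\mathcal T^L$ to identify this with the diagonal matrix element $\mathcal T^L_{n_1,n_2;n_1,n_2}(g^\dag g)$ — this is precisely \eqref{normphi}--\eqref{normpsi}. Your symmetric-power/operator-norm argument for part $(i)$ is a legitimate (and cleaner) variant, but be aware that it, like the paper's own derivation via \eqref{upbound}, produces the exponent $L=n_1+n_2$, not the $(n_1+n_2)/2$ printed in \eqref{upboundphipsi}; ``taking square roots in the right place'' is not a step, and the $L/2$ version is in fact false (test $g=\operatorname{diag}(t,\varepsilon)$ with $n_2=0$ and $t$ large), so this is a typo in the statement rather than something you can recover.

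For part $(ii)$ your sketch has the decisive inequality running the wrong way. Evaluating \eqref{irrep-mat-elem} at the Hermitian matrix $h=g^\dag g$ (so that $\gamma_{12}\gamma_{21}=|h_{12}|^2\geq 0$) exhibits the diagonal element as a sum of \emph{nonnegative} terms, $\mathcal T^L_{n_1,n_2;n_1,n_2}(h)=h_{11}^{n_1}h_{22}^{n_2}\sum_m\binom{n_1}{m}\binom{n_2}{m}r^m$ with $r=|h_{12}|^2/(h_{11}h_{22})$. Hence ``dropping the $|b|^2\geq 0$ contributions'' \emph{decreases} the sum and yields a \emph{lower} bound, not the upper bound you claim; the upper bound $\binom{L}{n_1}a^{n_1}d^{L-n_1}$ genuinely requires the positivity inequality $|b|^2\leq ad$ (i.e., $r\leq 1$), which you never invoke, followed by the Vandermonde identity $\sum_m\binom{n_1}{m}\binom{n_2}{m}=\binom{L}{n_1}$ — exactly the paper's route. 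Conversely, the lower bound, which you flag as the main obstacle and propose to attack with Stirling estimates on central binomial coefficients, is the easy half once the positive sum is written down: the single $m=0$ term already gives $\mathcal T^L_{n_1,n_2;n_1,n_2}(h)\geq h_{11}^{n_1}h_{22}^{n_2}=a^{n_1}d^{n_2}$ for \emph{all} $n_1,n_2$, which implies (and improves on) \eqref{lowupboundMT1}; the factor $1/\sqrt{\pi\min(n_1,n_2)}$ is an artifact of the paper's quite different derivation via the Jacobi-polynomial asymptotics \eqref{aspolJacobi} and Appendix~\ref{asympTs} (which is why the paper's lower bound is only asymptotic), not something you need to manufacture. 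Finally, your fallback $\sum_{m'}|\mathcal T^L_{m'm}(g)|^2\geq|\mathcal T^L_{mm}(g)|^2$ does not work: for non-normal $g$ (e.g., unitriangular $g$ with large off-diagonal entry) the single diagonal entry of $\mathcal T^L(g)$ stays bounded while $a^{n_1}d^{n_2}$ grows, so it cannot dominate the required lower bound.
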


\begin{proof}
By using group representation properties for the operator $ \mathcal{T}^L$,  we f\/ind the following expressions for the squared norms of the deformed complex Hermite polynomials
\begin{gather}
\big\Vert h^{g,L}_{n_1,n_2}\big\Vert^2  = \sum_{n'_1,n'_2}\big\vert \mathcal{T}^L_{n'_1,n'_2;n_1,n_2}(g)\big\vert^2
 = \sum_{n'_1,n'_2}\mathcal{T}^s_{n'_1,n'_2;n_1,n_2}(g) \overline{\mathcal{T}^L_{n'_1,n'_2;n_1,n_2}(g)} \nonumber\\
 \hphantom{\big\Vert h^{g,L}_{n_1,n_2}\big\Vert^2}{}
  = \sum_{n'_1,n'_2}\mathcal{T}^L_{n_1,n_2;n'_1,n'_2}\big(\bar{g}^{\dag}\big) \mathcal{T}^L_{n'_1,n'_2;n_1,n_2}(\bar g)
= \mathcal{T}^L_{n_1,n_2;n_1,n_2}\big(\bar{g}^{\dag}\bar g\big)= \mathcal{T}^L_{n_1,n_2;n_1,n_2}\big(g^{\dag}g\big),\!\!\!\label{normphi}
\\
\label{normpsi}
\big\Vert \widetilde{h}^{g,L}_{n_1,n_2}\big\Vert^2 = \mathcal{T}^L_{n_1,n_2;n_1,n_2}\big(\widetilde{g}^{\dag}\widetilde{g}\big)= \mathcal{T}^L_{n_1,n_2;n_1,n_2}\big(\big(g^{\dag}g\big)^{-1}\big).
\end{gather}
Hence we are led to studying the asymptotic behavior of the expressions arising from~\eqref{matelgamapp} and~\eqref{diagmatel} respectively
 \begin{gather}
\label{aspolJacobi}
\mathcal{T}^L_{n_1,n_2;n_1,n_2}(h) = (\det h)^{n_1}  h_{22}^{n_2-n_1}  P_{n_1}^{(0,n_2-n_1)}\left(1 + 2\frac{\vert h_{12}\vert^2}{\det h}\right)\\
\hphantom{\mathcal{T}^L_{n_1,n_2;n_1,n_2}(h) =}{}
= (\det h)^{n_1}  h_{22}^{n_2-n_1}\, {}_2F_1\left(-n_1,n_2+1;1;\frac{\vert h_{12}\vert^2}{\det h}\right)  ,\nonumber
\end{gather}
where $h$ is positive and Hermitian.

Note that the alternative forms of this expression, obtained by exploiting the symmetry with respect to the interchange $1\rightarrow 2$, may be easer to manipulate
\begin{gather}
\mathcal{T}^L_{n_1,n_2;n_1,n_2}(h) =(\det h)^{n_2}  h_{11}^{n_1-n_2}\,{}_2F_1\left(n_1 + 1,- n_2;1;-\frac{\vert h_{12}\vert^2}{\det h}\right)
\nonumber\\
\label{alterform3}
\hphantom{\mathcal{T}^L_{n_1,n_2;n_1,n_2}(h)}{}
=  h_{11}^{n_1}  h_{22}^{n_2}\, {}_2F_1\left(-n_1, -n_2;1;\frac{\vert h_{12}\vert^2}{h_{11} h_{22}}\right)\\
\hphantom{\mathcal{T}^L_{n_1,n_2;n_1,n_2}(h)}{}
= (\det h)^{n_1+n_2 +1}  h_{11}^{-n_2-1}  h_{22}^{-n_1-1}\, {}_2F_1\left(n_1 + 1, n_2 +1;1;\frac{\vert h_{12}\vert^2}{h_{11} h_{22}}\right)  .
\nonumber
\end{gather}
The most symmetrical and simplest form is clearly \eqref{alterform3}, which, once expanded, reads as
\begin{gather}
\label{alterform3dev}
\mathcal{T}^L_{n_1,n_2;n_1,n_2}(h)=  h_{11}^{n_1} h_{22}^{n_2} \sum_{m=0}^{n_1\curlyvee n_2}\binom{n_1}{m}\binom{n_2}{m}  \left(-\frac{\vert h_{12}\vert^2}{h_{11} h_{22}}\right)^{m}  .
\end{gather}
From this expression, from the fact that for any positive hermitian matrix $\vert h_{12}\vert^2 \leq h_{11} h_{22}$ (strict inequality if the matrix is nonsingular), and from the well-known summation formula (e.g., see~\cite{grad})
\begin{gather*}
\sum_m \binom{n_1}{m}\binom{n_2}{p-m} = \binom{n_1+n_2}{p}  ,
\end{gather*}
we easily derive the following upper bound (keeping in mind $L=n_1+n_2$),{\samepage
\begin{gather}
\label{upbound}
\mathcal{T}^L_{n_1,n_2;n_1,n_2}(h)  \leq
    \binom{L}{n_1}   h_{11}^{n_1}  h_{22}^{n_2}  \leq (\operatorname{tr} h)^L  .
\end{gather}
From this follow the upper bounds for the norms of the  vectors in question given in~\eqref{upboundphipsi}.}

Next, we prove in Appendix~\ref{asympTs} the following estimates  of the diagonal matrix elements $\mathcal{T}^L_{n_1,n_2;n_1,n_2}(h) $:
\begin{gather}
\label{lowupboundMT}
 \frac{h_{11}^{n_1}  h_{22}^{n_2}}{\sqrt{\pi \min(n_1,n_2)}}\leq \mathcal{T}^L_{n_1,n_2;n_1,n_2}(h) \leq \frac{L!}{n_1! n_2!}   h_{11}^{n_1}  h_{22}^{n_2}  \leq (\operatorname{tr} h)^{L}  ,
\end{gather}
the lower bound being asymptotic at large $n_1$, $n_2$,  whereas the upper bound is valid for any~$n_1$,~$n_2$.
The application of these estimates to the norms of the polynomials $\{h^{g,L}_{n_1,n_2}\}$ and $\{h^{\widetilde g,L}_{n_1,n_2}\}$ (with $L=n_1+n_2$) given by~\eqref{normphi} and~\eqref{normpsi} yields the  inequalities~\eqref{lowupboundMT1} and~\eqref{lowupboundMT2}.
\end{proof}

As we have seen previously, the operators and the vectors introduced so far satisfy Assumptions~${\mathcal D}$-pb~\ref{Assumption1} and ${\mathcal D}$-pb~\ref{Assumption2}. 
On the other hand, the estimates above suggest that, using the explicit representation of our vectors in terms of our deformed complex Hermite polynomials, Assumption ${\mathcal D}$-pb~\ref{Assumption3} might be not satisf\/ied. In fact, in order for ${\mathcal F}_g:=\{h^{g,L}_{n_1,n_2}\}$ and ${\mathcal F}_{\tilde g}:=\{\widetilde{h}^{g,L}_{n_1,n_2}\}$ (or equivalently $\{\widetilde{H}^g_n ({\mathfrak{z}})\}$ and $\{\widetilde{H}^g_n ({\mathfrak{z}})\}$) to be bases for ${\mathcal H}$, the product of their norms should be bounded in~$n_1$ and~$n_2$, see, for instance, \cite[Lemma~3.3.3]{dav}. Now we derive from~\eqref{lowupboundMT1} and~\eqref{lowupboundMT2}
\begin{gather*}
 \frac{1}{\pi \min(n_1,n_2)}\left(\frac{ad}{\vert \det g\vert^{2}}\right)^L\leq \big\Vert h^{g,L}_{n_1,n_2}\big\Vert^2 \big\Vert \widetilde{h}^{g,L}_{n_1,n_2}\big\Vert^2 \leq  \binom{L}{n_1}^2 \left(\frac{ad}{\vert \det g\vert^{2}}\right)^L  .
\end{gather*}
 So unless $g$ is diagonal, we see from $ad > \vert \det g\vert^2$   that the product is  not bounded in~$L$.

However, it is possible to check  that ${\mathcal F}_g$ and ${\mathcal F}_{\tilde g}$ are ${\mathcal G}$-quasi bases, with ${\mathcal G}:=D(T(g)^\dagger)\cap D(T^{-1}(g))$, which is dense in ${\mathcal H}$ since it contains the linear span of the original polyno\-mials~$h_{n_1,n_2}$. This is a consequence of the fact that the vectors in  ${\mathcal F}_g$ and ${\mathcal F}_{\tilde g}$ are the image, via $(T(g)^\dagger$ and~$(T^{-1}(g)$, of the~$h_{n_1,n_2}$'s. So, this is enough to conclude that we are fully within the general pseudo-bosonic framework.

\section{Bi-displacement operators and bi-coherent states}
\label{bcpb}

We now consider how a pair $(a^{\sharp},b^{\sharp})$ of pseudo-bosonic operators, behaving as in Section~\ref{mathpb}, can be used to construct a generalized version of the canonical coherent states. Our analysis extends that originally contained in~\cite{bagpb1, tri}. First of all we introduce, at least formally, the two  $z$-dependent operators
\begin{gather*}
\mathfrak{D}(z)=\exp\{z b-\overline{z} a\}, \qquad \widetilde{\mathfrak{D}}(z)=\exp\big\{z a^\dagger-\overline{z} b^\dagger\big\}  .
\end{gather*}
They will be named bi-displacement operators, by analogy with the Weyl--Heisenberg case.

\begin{AssumptionW}\label{Assumption5}
 With the notations of Section \ref{mathpb}, for all $z\in{\mathbb C}$,  $\mathfrak{D}(z)$ and  $\widetilde{\mathfrak{D}}(z)$ are def\/ined in the dense subspace ${\mathcal D}$ of ${\mathcal H}$ such that $a^\sharp{\mathcal D}\subseteq{\mathcal D}$ and $b^\sharp{\mathcal D}\subseteq{\mathcal D}$.
\end{AssumptionW}

The  Weyl  formula, $e^A \,e^B = e^{\left( \frac{1}{2} [A,B]\right)} \, e^{(A + B)}$, (arising from the Baker--Campbell--Hausdorf\/f relation) which is valid for any pair of operators that  commute with their commutator, $ [A,[A,B]] = 0 =  [B,[A,B]] $, yields the following alternative, factorized forms of these operators
\begin{gather}
\label{facMD}
    {\mathfrak{D}} (z) =    e^{-\vert z\vert^2/2}  e^{z b}  e^{-\bar z  a}= e^{\vert z\vert^2/2} e^{-\bar z  a} e^{z b},\\
  \label{facTMD}  \widetilde{\mathfrak{D}}(z)=   e^{-\vert z\vert^2/2}  e^{z a^{\dag}}  e^{-\bar z b^{\dag}}= e^{\vert z\vert^2/2} e^{-\bar z b^{\dag}} e^{z  a^{\dag}}  .
\end{gather}

The operator-valued maps $z \mapsto {\mathfrak{D}}(z)$  and $z \mapsto \widetilde{\mathfrak{D}}(z)$ are (possibly local) projective  representations of  the abelian group  of translations of the  complex plane.
Indeed, let us apply the Weyl formula to the product ${\mathfrak{D}}(z_1) {\mathfrak{D}}(z_2)$.  One gets the composition rules
\begin{gather}
\label{compMD}
{\mathfrak{D}}(z_1)  {\mathfrak{D}}(z_2)  = e^{-i z_1 \wedge z_2} {\mathfrak{D}}(z_1 + z_2)  ,\\
\label{compTMD}
\widetilde{\mathfrak{D}}(z_1)  \widetilde{\mathfrak{D}}(z_2)  = e^{-i z_1 \wedge z_2} \widetilde{\mathfrak{D}}(z_1 + z_2)  ,
\end{gather}
where $z_1 \wedge z_2:= x_1 y_2 - x_2 y_1$ for $x_i= \operatorname{Re} z_i$, $y_i= \operatorname{Im} z_i$. In particular, since
${\mathfrak{D}}(0) = I= \widetilde{\mathfrak{D}}(0)$,
\begin{gather*}
({\mathfrak{D}}(z))^{-1}= {\mathfrak{D}}(-z) , \qquad \big(\widetilde{\mathfrak{D}}(z)\big)^{-1}= \widetilde{\mathfrak{D}}(-z)  .
\end{gather*}
These relations also give
\begin{gather}
\label{MDTMD}
 {\mathfrak{D}}(-z)= {\mathfrak{D}}(z)^{-1}= \widetilde{\mathfrak{D}}^\dagger(z)  .
\end{gather}
Let $\varpi(z) $ be a function on the complex plane obeying the (normalization) condition
\begin{gather}
\label{varpi0}
\varpi(0) = 1 ,
\end{gather}
and being assumed to def\/ine the two bounded operators ${\sf M}$ and $\widetilde{\sf M}$ on $\mathcal{H}$ through the operator-valued integrals
\begin{gather*}
{\sf M}  = \int_{\mathbb{C}} \varpi(z) \mathfrak{D}(z)   \frac{\mathrm{d}^2 z}{\pi}  ,\\
\widetilde{\sf M}  = \int_{\mathbb{C}} \varpi(z) \widetilde{\mathfrak{D}}(z)   \frac{\mathrm{d}^2 z}{\pi}= \int_{\mathbb{C}} \varpi(-z) \mathfrak{D}^{\dag}(z)   \frac{\mathrm{d}^2 z}{\pi}   .
\end{gather*}
Note that if we explicitly express the dependence of ${\mathsf{M}}$ on  the weight function, ${\mathsf{M}} \equiv {\mathsf{M}}^{\varpi}$, then $\widetilde{\mathsf{M}} \equiv  \left({\mathsf{M}}^{{\sf P}{\varpi}}\right)^{\dag}$, where ${\sf P}$ is the parity operator, ${\sf P}f(z) = f(-z)$. Hence, we have the interesting relation
\begin{gather*}
\varpi(z) = \overline{\varpi(-z)}\quad \forall\, z \quad \Rightarrow \quad {\sf M}^{\dag}= \widetilde{\sf M}  .
\end{gather*}
We now give the following proposition, where the fact that ${\mathfrak{D}}(z)$ and $\widetilde{\mathfrak{D}}(z)$ are def\/ined for each $z\in{\mathbb C}$ is crucial:

\begin{prop}\label{propcs}
If ${\mathfrak{D}}(z)$,  $\widetilde{\mathfrak{D}}(z)$, and $\varpi(z)$, are such that
\begin{gather}
\label{FubMz}
{\mathfrak{D}}(z)  \left[ \int_{\mathbb{C}}   {\mathfrak{D}}(z^{\prime})  \varpi(z^{\prime}) \frac{\mathrm{d}^2 z^{\prime}}{\pi}\right] {\mathfrak{D}}(-z)  =  \int_{\mathbb{C}}    {\mathfrak{D}}(z) {\mathfrak{D}}(z^{\prime})  {\mathfrak{D}}(-z)  \varpi(z^{\prime}) \frac{\mathrm{d}^2 z^{\prime}}{\pi}  , \\
\nonumber
\widetilde{\mathfrak{D}}(z)   \left[\int_{\mathbb{C}}   \widetilde{\mathfrak{D}}(z^{\prime})  \varpi(z^{\prime}) \frac{\mathrm{d}^2 z^{\prime}}{\pi}\right]  \widetilde{\mathfrak{D}}(-z)  =  \int_{\mathbb{C}}    \widetilde{\mathfrak{D}}(z) \widetilde{\mathfrak{D}}(z^{\prime})  \widetilde{\mathfrak{D}}(-z)  \varpi(z^{\prime}) \frac{\mathrm{d}^2 z^{\prime}}{\pi}
\end{gather}
hold,  for all $z$,  in a weak sense on the dense subspace ${\mathcal D}$ of ${\mathcal H}$,
then the families
\begin{gather}
\label{bidisMD}
   {\sf M}(z):=  {\mathfrak{D}}(z) {\sf M}{\mathfrak{D}}(-z)= {\mathfrak{D}}(z){\sf M}\widetilde{\mathfrak{D}}^{\dag}(z),\\
\nonumber
\widetilde{\sf M}(z):=  \widetilde{\mathfrak{D}}(z) \widetilde{\sf M}\widetilde{\mathfrak{D}}(-z)= \widetilde{\mathfrak{D}}(z){\sf M}{\mathfrak{D}}^{\dag}(z)
\end{gather}
 of bi-displaced operators under the  respective actions of  ${\mathfrak{D}}(z)$ and $\widetilde{\mathfrak{D}}(z)$ resolve the identity
 in the sense given in \eqref{resunipb}
\begin{gather}
\label{residMz}
\int_{\mathbb{C}}   {\sf M}(z)  \frac{\mathrm{d}^2 z}{\pi}= I  ,\\
\label{residTMz} \int_{\mathbb{C}}   \widetilde{\sf M}(z)  \frac{\mathrm{d}^2 z}{\pi}= I .
\end{gather}
\end{prop}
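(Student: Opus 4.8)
The plan is to establish the resolution of identity \eqref{residMz} by a direct computation, substituting the definition \eqref{bidisMD} of ${\sf M}(z)$ and the integral definition of ${\sf M}$, then using the hypothesis \eqref{FubMz} to interchange the order of integration, and finally reducing to the abelian composition rule \eqref{compMD} and a Gaussian-type integral over ${\mathbb C}$. The statement \eqref{residTMz} follows by the identical argument with ${\mathfrak D}$ replaced by $\widetilde{\mathfrak D}$, using \eqref{compTMD} in place of \eqref{compMD}; it suffices to write out the first case in detail.

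\medskip

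\noindent\textbf{Step 1.}\quad First I would write, for $f\in{\mathcal D}$ and in the weak sense,
\begin{gather*}
\int_{\mathbb C}{\sf M}(z)\,\frac{\mathrm d^2z}{\pi}
= \int_{\mathbb C}{\mathfrak D}(z)\,{\sf M}\,{\mathfrak D}(-z)\,\frac{\mathrm d^2z}{\pi}
= \int_{\mathbb C}{\mathfrak D}(z)\left[\int_{\mathbb C}{\mathfrak D}(z')\,\varpi(z')\,\frac{\mathrm d^2z'}{\pi}\right]{\mathfrak D}(-z)\,\frac{\mathrm d^2z}{\pi}.
\end{gather*}
\textbf{Step 2.}\quad Invoke hypothesis \eqref{FubMz} to pull ${\mathfrak D}(z)$ and ${\mathfrak D}(-z)$ inside the inner integral and then interchange the two integrations, obtaining
\begin{gather*}
\int_{\mathbb C}\int_{\mathbb C}{\mathfrak D}(z)\,{\mathfrak D}(z')\,{\mathfrak D}(-z)\,\varpi(z')\,\frac{\mathrm d^2z'}{\pi}\,\frac{\mathrm d^2z}{\pi}.
\end{gather*}
\textbf{Step 3.}\quad Apply the composition rule \eqref{compMD} twice: ${\mathfrak D}(z)\,{\mathfrak D}(z')=e^{-iz\wedge z'}{\mathfrak D}(z+z')$ and then ${\mathfrak D}(z+z')\,{\mathfrak D}(-z)=e^{-i(z+z')\wedge(-z)}{\mathfrak D}(z')=e^{iz'\wedge z}{\mathfrak D}(z')$. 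Since $z\wedge z'=-z'\wedge z$, the two phase factors are $e^{-iz\wedge z'}$ and $e^{iz'\wedge z}=e^{-iz\wedge z'}$, so their product is $e^{-2iz\wedge z'}$. Hence the integrand becomes $e^{-2iz\wedge z'}\,{\mathfrak D}(z')\,\varpi(z')$, which is independent of $z$ except through the phase.

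\medskip

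\noindent\textbf{Step 4 (the main point).}\quad Now perform the $z$-integration first. Writing $z=x+iy$, $z'=x'+iy'$, we have $z\wedge z'=xy'-x'y$, so
\begin{gather*}
\int_{\mathbb C}e^{-2i(xy'-x'y)}\,\frac{\mathrm d^2z}{\pi}
= \frac1\pi\int_{\mathbb R}e^{2ix'y}\,\mathrm dy\int_{\mathbb R}e^{-2ixy'}\,\mathrm dx
= \frac1\pi\,\pi\delta(x')\,\pi\delta(y')
\end{gather*}
understood as a distribution in $z'$; more precisely it equals $\pi\,\delta^{(2)}(z')$ with the normalization $\int_{\mathbb C}\delta^{(2)}(z')\,\mathrm d^2z'=1$. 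Consequently
\begin{gather*}
\int_{\mathbb C}\left[\int_{\mathbb C}e^{-2iz\wedge z'}\,\frac{\mathrm d^2z}{\pi}\right]{\mathfrak D}(z')\,\varpi(z')\,\frac{\mathrm d^2z'}{\pi}
= {\mathfrak D}(0)\,\varpi(0)= I,
\end{gather*}
using ${\mathfrak D}(0)=I$ and the normalization \eqref{varpi0}, namely $\varpi(0)=1$. This proves \eqref{residMz}. The hard part — and the reason the hypothesis \eqref{FubMz} is stated as an assumption rather than derived — is precisely the justification of the double-integral manipulations of Steps 1--2 and the Fourier/delta-function evaluation of Step 4 at the level of rigour appropriate to unbounded operators: all identities are to be read weakly on vectors $f,g\in{\mathcal D}$, the exchange of integrations requires \eqref{FubMz} (a Fubini-type hypothesis), and the emergence of $\delta^{(2)}$ means the final equality holds in the weak (distributional) sense indicated by \eqref{resunipb}. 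I would phrase Step 4 via the Fourier-inversion / Plancherel identity for ${\mathcal L}^2({\mathbb C})$ rather than literal delta functions: for a test function $\phi$ on ${\mathbb C}$ one has $\int_{\mathbb C}\!\big(\int_{\mathbb C}e^{-2iz\wedge z'}\phi(z')\tfrac{\mathrm d^2z'}{\pi}\big)\tfrac{\mathrm d^2z}{\pi}=\phi(0)$, and apply this with $\phi(z')=\langle f,{\mathfrak D}(z')\,g\rangle\,\varpi(z')$, which is admissible since ${\mathfrak D}(z')$ is defined on ${\mathcal D}$ for every $z'$ and $\varpi$ satisfies the standing assumptions. Finally, \eqref{residTMz} is obtained verbatim with the tilde on every ${\mathfrak D}$, replacing \eqref{compMD} by \eqref{compTMD} and \eqref{FubMz} by its tilded counterpart, and using $\widetilde{\mathfrak D}(0)=I$.
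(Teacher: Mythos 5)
Your proposal is correct and follows essentially the same route as the paper: substitute the definition of ${\sf M}(z)$, invoke \eqref{FubMz} to move ${\mathfrak{D}}(\pm z)$ inside the integral, apply the composition rule \eqref{compMD} twice to produce the phase $e^{-2iz\wedge z'}$, and evaluate the $z$-integral as the symplectic Fourier transform of $1$, namely $\pi\delta^{2}(z')$, so that ${\mathfrak{D}}(0)\varpi(0)=I$ gives \eqref{residMz}; the tilded case is verbatim. Your added remarks on the weak/distributional reading and the Plancherel phrasing of the delta-function step only make explicit what the paper leaves implicit.
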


\begin{proof}
With the assumption \eqref{FubMz}, we have after applying  \eqref{compMD} twice
\begin{gather*}
\int_{\mathbb{C}}   {\sf M}(z)  \frac{\mathrm{d}^2 z}{\pi} = \int_{\mathbb{C}}  \frac{\mathrm{d}^2 z}{\pi}  \int_{\mathbb{C}}    {\mathfrak{D}}(z) {\mathfrak{D}}(z^{\prime})  {\mathfrak{D}}(-z) \varpi(z^{\prime}) \frac{\mathrm{d}^2 z^{\prime}}{\pi} \\
\hphantom{\int_{\mathbb{C}}   {\sf M}(z)  \frac{\mathrm{d}^2 z}{\pi}}{}
=  \int_{\mathbb{C}}  \frac{\mathrm{d}^2 z}{\pi} \int_{\mathbb{C}}  e^{-2i z\wedge z^{\prime}}    {\mathfrak{D}}(z^{\prime}) \varpi(z^{\prime}) \frac{\mathrm{d}^2 z^{\prime}}{\pi}  .
\end{gather*}
Then \eqref{residMz} is a direct consequence  of the formula (symplectic Fourier transform of the func\-tion~$1$ in the plane)
 \begin{gather}
\label{sympFour1}
\int_{\mathbb{C}} e^{-2i z\wedge z^{\prime}}     \frac{\mathrm{d}^2 z}{\pi} =\int_{\mathbb{C}}  e^{ z \bar z^{\prime} -\bar z z^{\prime}}    \frac{\mathrm{d}^2 z}{\pi}  = \pi \delta^{2}(z^{\prime})  ,
\end{gather}
and of the condition (\ref{varpi0}) with ${\mathfrak{D}}(0)= I$. The same demonstration applies trivially to \eqref{residTMz}.
\end{proof}

Let us expand the operators ${\mathfrak{D}}$ and $\widetilde{\mathfrak{D}}$ in terms of the biorthonormal bases or sets\footnote{Their nature depends on which one of the Assumptions ${\mathcal D}$-pb~\ref{Assumption3} or ${\mathcal D}$-pbw~\ref{Assumption4} is satisf\/ied.} \eqref{FB32},
\begin{gather}
\label{expMD}
  {\mathfrak{D}}(z)  =  \sum_{m, n}  {\mathfrak{D}}_{mn}(z) |\varphi_m\rangle\langle\Psi_n|  , \\
\nonumber
  \widetilde{\mathfrak{D}}(z)=   \sum_{m, n}  \widetilde{\mathfrak{D}}_{mn}(z) |\Psi_m\rangle\langle\varphi_n|  ,
\end{gather}
where the matrix elements  in~\eqref{expMD}, which involve associated Laguerre polynomials $L^{(\alpha)}_n(t)$, are calculated in Appendix \ref{AmatelMD}
\begin{gather}
\label{matelMD}
{\mathfrak{D}}_{mn}(z) := \langle \Psi_m|{\mathfrak{D}}(z)|\varphi_n\rangle  =  \sqrt{\dfrac{n!}{m!}} e^{-\vert z\vert^{2}/2}  z^{m-n}   L_n^{(m-n)}(\vert z\vert^{2})  ,   \qquad \mbox{for} \ m\geq n  ,
\end{gather}
with $L_n^{(m-n)}(t) = \frac{m!}{n!} (-t)^{n-m}L_m^{(n-m)}(t)$ for $n\geq m$. From \eqref{MDTMD} we have
\begin{gather}
\label{matelTMD}
 \widetilde{\mathfrak{D}}_{mn}(z)= \overline{ {\mathfrak{D}}}_{nm}(-z) .
\end{gather}
Note that the polynomial parts of these matrix elements are, up to a factor, complex Hermite polynomials.

As an interesting example, which is  inspired from \cite{cahillglauber69} (see also~\cite{bergaz14}), we choose
\begin{gather*}
\varpi_s(z) = e^{s |z|^2/2}  , \qquad \operatorname{Re}  s<1  .
\end{gather*}
Since this function is isotropic in the complex plane, the resulting operator ${\sf M}\equiv {\sf M}_s$ is diagonal.
when expanded in terms of the $\varphi_n$'s and $\Psi_n$'s in~\eqref{FB32}.  From the expression~(\ref{matelMD}) of the matrix elements of~${\mathfrak{D}}(z)$, and the integral~\cite{magnus66}
\begin{gather*}
\int_{0}^{\infty}   e^{-\nu x} x^{\lambda}  L_{n}^{\alpha}(x)\mathrm{d} x =\frac{\Gamma(\lambda + 1)\Gamma(\alpha+n+1)}{n!  \Gamma(\alpha + 1)}\nu^{-\lambda -1 }{}_{2}F_1(-n,\lambda + 1; \alpha + 1; \nu^{-1})  ,
\end{gather*}
we get the diagonal elements of  ${\sf M}_s$
\begin{gather*}
\langle \varphi_n|{\sf M}_s|\Psi_n\rangle = \frac{2}{1-s} \left( \frac{s+1}{s-1} \right)^n ,
\end{gather*}
and so
\begin{gather*}
{\sf M}_s= \int_{\mathbb{C}} \varpi_s(z) {\mathfrak{D}}(z)  \frac{{\mathrm{d}}^2z}{\pi }= \frac{2}{1-s} \exp \left\lbrack\left( \log \dfrac{s+1}{s-1}\right) ba \right\rbrack.
\end{gather*}
Then $s=-1$ corresponds to the basic operator
\begin{gather*}
{\sf M}_{-1}= \lim_{s\to -1_{-}} \dfrac{2}{1-s} \exp \left\lbrack\left( \ln \dfrac{s+1}{s-1}\right) b a \right\rbrack = |\varphi_0\rangle\langle \Psi_0| ,
\end{gather*}
\emph{Bi-coherent} states show up when precisely this operator is  bi-displaced along \eqref{bidisMD}
\begin{gather*}
{\sf M}_1(z):={\mathfrak{D}}(z) {\sf M}_1{\mathfrak{D}}(-z)= {\mathfrak{D}}(z) |\varphi_0\rangle\langle \Psi_0| \widetilde{\mathfrak{D}}^{\dag}(z)\equiv |\varphi(z)\rangle\langle \Psi(z)| ,
\end{gather*}
i.e., they are def\/ined as
\begin{gather}
\varphi(z)={\mathfrak{D}}(z)\varphi_0,\qquad \Psi(z)=\widetilde{\mathfrak{D}}(z) \Psi_0 .
\label{92}
\end{gather}
It is necessary to check that these vectors are well def\/ined in ${\mathcal H}$ for some $z\in{\mathbb C}$. Using the factorizations~\eqref{facMD} and~\eqref{facTMD} together with the properties of~$\varphi_0$ and~$\Psi_0$, we get
\begin{gather}
\varphi(z)=e^{-|z|^2/2}\sum_{n=0}^\infty\frac{z^n}{\sqrt{n!}}\varphi_n , \qquad
\Psi(z)=e^{-|z|^2/2}\sum_{n=0}^\infty\frac{z^n}{\sqrt{n!}}\Psi_n .
\label{93}
\end{gather}
Since ${\mathfrak{D}}(z)$ and $\widetilde{\mathfrak{D}}(z)$ are not unitary operators, or alternatively since $\varphi_n$ and $\Psi_n$ are not normalized in general, we should concretely check that the series in~\eqref{93} both converge, at least for some reasonably large set of~$z$'s. In fact, so far we have assumed that the states exist for all $z\in{\mathbb C}$. This is clear whenever~${\mathcal F}_\varphi$ and~${\mathcal F}_\Psi$ are o.n.\ bases (in this case, in fact, convergence is for all $z\in{\mathbb C}$), or when $\varphi_0\in D({\mathfrak{D}}(z))$ and $\Psi_0\in D(\widetilde{\mathfrak{D}}(z))$, but it is not evident in general. However, it is possible to prove the following.

\begin{prop}
Suppose that $r_\varphi, r_\Psi>0$ and $0\leq \alpha_\varphi,\alpha_\Psi<\frac{1}{2}$ exist such that
\begin{gather}
\|\varphi_n\|\leq r_\varphi^n(n!)^{\alpha_\varphi},\qquad \|\Psi_n\|\leq r_\Psi^n(n!)^{\alpha_\Psi},
\label{93bis}
\end{gather}
then $\varphi(z)$ is well defined for all $z$.
\end{prop}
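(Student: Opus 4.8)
The plan is to estimate the norm of the partial sums of the series
$\varphi(z)=e^{-|z|^2/2}\sum_{n\geq0}\frac{z^n}{\sqrt{n!}}\varphi_n$
directly, using only the growth hypothesis \eqref{93bis} on $\|\varphi_n\|$, and conclude absolute convergence in $\mathcal H$ for every fixed $z\in\mathbb C$. First I would bound the tail of the series by the triangle inequality,
\[
\Big\|\sum_{n=N}^{M}\frac{z^n}{\sqrt{n!}}\varphi_n\Big\|\leq\sum_{n=N}^{M}\frac{|z|^n}{\sqrt{n!}}\,\|\varphi_n\|\leq\sum_{n=N}^{M}\frac{(|z|\,r_\varphi)^n}{\sqrt{n!}}(n!)^{\alpha_\varphi}=\sum_{n=N}^{M}(|z|\,r_\varphi)^n\,(n!)^{\alpha_\varphi-1/2}.
\]
Since $\alpha_\varphi-\tfrac12<0$, write $\gamma:=\tfrac12-\alpha_\varphi>0$; the general term is $\rho^n/(n!)^{\gamma}$ with $\rho=|z|r_\varphi$, and I would show this is the general term of a convergent series for every $\rho\geq0$.

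The key step is the convergence of $\sum_n \rho^n/(n!)^{\gamma}$. I would give the ratio test: the ratio of consecutive terms is $\rho/((n+1)^{\gamma})\to0$ as $n\to\infty$, so the series converges (absolutely) for all $\rho$, and in fact defines an entire function of $\rho$ (a Mittag-Leffler-type function). Consequently the partial sums of $\sum_n \frac{z^n}{\sqrt{n!}}\varphi_n$ form a Cauchy sequence in $\mathcal H$; by completeness of $\mathcal H$ the series converges, hence $\varphi(z)$ is a well-defined vector of $\mathcal H$ for every $z\in\mathbb C$. Multiplying by the scalar $e^{-|z|^2/2}$ changes nothing.

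One should also note consistency with the formal derivation: the expression \eqref{93} for $\varphi(z)$ was obtained from the factorization \eqref{facMD} and $a\varphi_0=0$; once we know the series converges in $\mathcal H$, the vector $\varphi(z)=\mathfrak D(z)\varphi_0$ is genuinely defined (in particular $\varphi_0\in D(\mathfrak D(z))$ for all $z$), so Assumption~$\mathcal D$-pbw~\ref{Assumption5} and the hypotheses of Proposition~\ref{propcs} concerning the existence of $\mathfrak D(z)\varphi_0$ are met. I do not expect a genuine obstacle here: the only mild subtlety is making sure the estimate is applied uniformly over the relevant range of $n$ (the bound \eqref{93bis} is assumed for all $n$, so this is immediate), and that $\alpha_\varphi<\tfrac12$ strictly is what makes the exponent $\gamma$ positive — if $\alpha_\varphi=\tfrac12$ the series would only converge for $\rho<1$, i.e.\ $|z|<1/r_\varphi$, which is why the strict inequality is imposed in the statement. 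The analogous argument, with $r_\Psi$ and $\alpha_\Psi$, handles $\Psi(z)$, although the proposition as stated only asserts the claim for $\varphi(z)$.
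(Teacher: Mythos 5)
Your proof is correct and follows essentially the same route as the paper: both arguments reduce to the convergence, for every $z\in\mathbb C$, of the scalar series $\sum_n (r_\varphi|z|)^n(n!)^{\alpha_\varphi-\frac12}$, the paper bounding $\Vert\varphi(z)\Vert^2$ by the square of this sum via Cauchy--Schwarz on the double series, while you apply the triangle inequality to the partial sums and invoke completeness. Your Cauchy-sequence formulation is, if anything, the slightly more careful way to state the same estimate, since it establishes convergence of the vector series before its norm is computed.
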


\begin{proof}
The proof relies upon the following estimate
\begin{gather*}
\|\varphi(z)\|^2=e^{-|z|^2} \sum_{n,k=0}^\infty \frac{z^n\overline{z}^k}{\sqrt{n!k!}} \langle \varphi_k,\varphi_n\rangle \leq e^{-|z|^2} \left(\sum_{n,k=0}^\infty\frac{(r_\varphi |z|)^n}{(n!)^{\frac{1}{2}-\alpha_\varphi}}\right)^2,
\end{gather*}
which converges for all values of $z\in{\mathbb C}$
\end{proof}

Analogously, we can prove that $\Psi(z)$ is well def\/ined for all $z$. Moreover, $\langle \varphi(z),\Psi(z)\rangle =\langle \varphi_0,\Psi_0\rangle =1$.

 Notice that the inequalities in~(\ref{93bis}) are surely satisf\/ied for Riesz bases, since in this case the norms of both $\varphi_n$ and $\Psi_n$ are uniformly bounded in~$n$. However, our assumption here does not prevent us from considering families ${\mathcal F}_\varphi$ and ${\mathcal F}_\Psi$ of vectors with divergent norms, as often happen in explicit models~\cite{bagbook}. In other words,  $\varphi(z)$ and~$\Psi(z)$ could also be def\/ined  if ${\mathcal F}_\varphi$ and ${\mathcal F}_\Psi$ are not bases, which happens if both  $\|\varphi_n\|$ and $\|\Psi_n\|$ diverge with~$n$~\cite{bagbook}, at least if condition~(\ref{93bis}) holds true.

It is interesting to notice that conditions (\ref{93bis}) are indeed satisf\/ied in several models recently considered in the literature. For instance, in~\cite{bag2}, the vector $\varphi_n$ satisf\/ies an inequality like $\|\varphi_n\|\leq(1+|\alpha-\beta|)^{n/2}$, where $\alpha$ and $\beta$ are two in general complex parameters of the model. A~similar estimate, with a harmless overall constant, can also be found for~$\|\Psi_n\|$. This means that~(\ref{93bis}) also are satisf\/ied in the model originally proposed in~\cite{bag4}, which is a special case of that in \cite{bag2}, and for the model discussed in~\cite{bag3}, which is a two-dimensional, non commutative, version of the same model.

Also the vectors introduced in the Swanson model~\cite{bag4} satisfy similar inequalities. Indeed~\cite{bagbook}, in this model we have found that
\begin{gather*}
\|\varphi_n\|^2=|N_1|^2\sqrt{\frac{\pi}{\cos(2\theta)}} P_n\left(\frac{1}{\cos(2\theta)}\right),\qquad \|\Psi_n\|=\left|\frac{N_2}{N_1}\right|\|\varphi_n\|,
\end{gather*}
where $N_1$ and $N_2$ are normalization constants, $\theta$ is a parameter in $\big]{-}\frac{\pi}{4},\frac{\pi}{4}\big[$, and $P_n$ is a Legendre polynomial. Using \cite{szego} we deduce that, for instance,
\begin{gather*}
\|\varphi_n\|\leq A_\theta \alpha_\theta^n,
\end{gather*}
where $A_\theta$ is a constant and
\begin{gather*}
\alpha_\theta=\sqrt{\frac{1}{\cos(2\theta)}+\left(\frac{1}{\cos^2(2\theta)}-1\right)^{1/2}}.
\end{gather*}

\begin{Remark}
 The possibility remains open that $\varphi(z)$ and $\Psi(z)$ only exist for $z\in{\mathcal E}$, with~${\mathcal E}$ a~proper (suf\/f\/iciently large) subset of~${\mathbb C}$. When this is so, of course, the proof of Proposition~\ref{propcs} fails to work since the integral in~(\ref{sympFour1}) will only be extended to ${\mathcal E}$ and not to all ${\mathbb C}$. Therefore, our bi-coherent states need not to resolve the identity anymore.  The analysis of this situation is postponed to another paper.
\end{Remark}

Another reason why these vectors are called bi-coherent is because they are eigenstates of some lowering operators. Indeed we can check that
\begin{gather*}
a\varphi(z)=z\varphi(z), \qquad b^\dagger\Psi(z)=z\Psi(z),
\end{gather*}
for all $z\in{\mathbb C}$. Their overlap is given by the kernel
\begin{gather*}
\mathcal{K}(z,z^{\prime})=\langle \varphi(z)|\Psi(z^{\prime})\rangle= e^{-|z|^2/2} e^{-|z^{\prime}|^2/2}  e^{\bar z z^{\prime}}=  e^{-|z-z^{\prime}|^2/2}  e^{i z \wedge z^{\prime}} ,
\end{gather*}
which is the same kernel as that for the canonical coherent states.
As a particular case of~\eqref{residMz}, they resolve the identity
\begin{gather}
\label{bicoresI}
\int_{\mathbb{C}}   | \varphi(z)\rangle\langle\Psi(z)|   \frac{\mathrm{d}^2 z}{\pi}= I = \int_{\mathbb{C}}   | \Psi(z)\rangle\langle\varphi(z)|   \frac{\mathrm{d}^2 z}{\pi} ,
\end{gather}
and this entails the reproducing property of the kernel
\begin{gather*}
\int_{\mathbb{C}}   \mathcal{K}(z,z^{\prime}) \mathcal{K}(z^{\prime},z^{\prime\prime})\frac{\mathrm{d}^2 z^{\prime}}{\pi} = \mathcal{K}(z,z^{\prime\prime}).
\end{gather*}
Finally note the projective covariance property of bi-coherent states, as a direct consequence of~\eqref{92} and~\eqref{compMD} and~\eqref{compTMD}
\begin{gather}
\label{covbicoh}
{\mathfrak{D}}(z)\varphi(z^{\prime}) =e^{-i z \wedge z^{\prime}} \varphi(z+ z^{\prime})   , \qquad \widetilde{\mathfrak{D}}(z)\Psi(z^{\prime}) = e^{-i z \wedge z^{\prime}}\Psi(z+ z^{\prime}) .
\end{gather}

\section{Integral quantization with bicoherent families and more}
\label{quantpb}

We now adapt the integral quantization scheme described in \cite{aagbook13,bergaz14} and \cite{balfrega14} to the present pseudo-bosonic formalism. This is made possible when the resolutions of the identity~\eqref{residMz} and~\eqref{residTMz} are valid on some dense subspace of the Hilbert space in question. Given a weight function $\varpi(z)$ with $\varpi(0)=1$ and the resulting families of bi-displaced operators~${\mathsf{M}}(z)$ and~$\widetilde{\mathsf{M}}(z)$, the quantizations of a function $f(z)$ on the complex plane  is def\/ined by the linear maps
\begin{gather}
\label{qfM} f\mapsto A_{f}   =\int_{\mathbb{C}}f(z)  {\sf M}(z) \frac{\mathrm{d}^{2}z}{\pi}=\int_{\mathbb{C}}\mathcal{F}(-z)  {\mathfrak{D}}(z) \varpi(z)\frac{\mathrm{d}^{2}z}{\pi} , \\
\nonumber
f\mapsto \widetilde A_{f}   =\int_{\mathbb{C}}f(z) \widetilde{\mathsf{M}}(z) \frac{\mathrm{d}^{2}z}{\pi}=\int_{\mathbb{C}}\mathcal{F}(-z)  \widetilde{\mathfrak{D}}(z) \varpi(z)\frac{\mathrm{d}^{2}z}{\pi} ,
\end{gather}
where $\mathcal{F}$ is the symplectic Fourier transform of $f$,
\begin{gather*}
 \mathcal{F}[f](z)\equiv \mathcal{F}(z)=\int_{\mathbb{C}}f(\xi)  e^{z\bar{\xi}-\bar{z}\xi} \frac{\mathrm{d}^{2}\xi}{\pi}= \int_{\mathbb{C}}f(\xi)  e^{2i\xi\wedge z} \frac{\mathrm{d}^{2}\xi}{\pi} .
\end{gather*}
Covariance with respect to translations reads $A_{f(z-z_{0})}={\mathfrak{D}}(z_{0})A_{f(z)}{\mathfrak{D}}(z_{0})^{\dagger}$. In the case of a~real even weight function we have the relation $\widetilde A_{f}= A_{\bar f}^{\dag}$, and then, if the function $f$ is real, the adjoint of~$A_f$ is~$\widetilde A_{f}$. A more delicate question is to f\/ind pairs $(f,\varpi)$ for which $A_f$ is symmetric.

In the sequel we focus on the quantizations using~${\mathsf{M}}(z)$ only, since there are well-def\/ined relations between~$A_{f}$ and~$\widetilde A_{f}$.

We now show that the generic pseudo-boson commutation rule~\eqref{FB31} is always the outcome of the above quantization, whatever the chosen complex function~$\varpi (z )$, provided
integrability and dif\/ferentiability at the origin is ensured. For this let us calculate
$A_{z}$ and $A_{\bar{z}}$. Taking into account that the symplectic Fourier transform of the function $z$, $\mathcal{F}[z](z)$, is equal to $-\partial_{\bar{z}}\pi\delta^{2}(z)$,
where $\pi\delta^{2}(z)=\int_{\mathbb{C}}e^{z\bar{\xi}-\bar{z}\xi}\frac{d^{2}\xi}{\pi}$,
one has from~\eqref{qfM} $A_{z}=-\left[\varpi(z)\partial_{\bar{z}}{\mathfrak{D}}(z)+{\mathfrak{D}}(z)\partial_{\bar{z}}\varpi(z)\right]_{z=0}$.
Then, using $\partial_{\bar{z}}{\mathfrak{D}}(z)=-\left(a-\frac{z}{2}\right){\mathfrak{D}}(z)$
we obtain f\/inally
\begin{gather*}
A_{z}=a\varpi (0 )-\left.\partial_{\bar{z}}\varpi\right\vert _{z=0}= a-\left.\partial_{\bar{z}}\varpi\right\vert _{z=0} .
\end{gather*}
Similarly, we obtain for $A_{\bar{z}}$ the following expression
\begin{gather*}
A_{\bar{z}}=b+\left.\partial_{z}\varpi\right\vert _{z=0} ,
\end{gather*}
after using the relation $\partial_{z}{\mathfrak{D}}(z)=\left(b-\frac{\bar{z}}{2}\right){\mathfrak{D}}(z)$.

Def\/ining the  Poisson bracket for functions $f(z)$ (actually $f(z,\bar z))$ as
\begin{gather*}
\{f,g\}= \frac{\partial f}{\partial z}   \frac{\partial g}{\partial \bar z} - \frac{\partial g}{\partial z} \frac{\partial f}{\partial \bar z}  ,
\end{gather*}
we thus check that the map \eqref{qfM} is ``pseudo-canonical'' in the sense that
\begin{gather*}
\{z,\bar z\} = 1 \mapsto [a,b]= I  .
\end{gather*}

\section{Conclusion}
\label{conclu}

In this paper we have discussed two illustrations of the ${\mathcal D}$-pseudo-bosonic formalism, the biortho\-go\-nal complex Hermite polynomials, and a second using families of vectors and operators in the underlying Hilbert space, built in way similar to that of the standard coherent states, i.e.,  as orbits of  projective representations of the Weyl--Heisenberg group. We have also considered the resolutions of the identity satisf\/ied by these families and the related integral quantizations naturally arising from them. In particular, these quantizations  yield  exactly the genuine pseudo-bosonic commutation rules.

These results can be of some interest in connection with PT or pseudo-hermitian quantum mechanics, where the role of self-adjoint operators is usually not so relevant. In~\cite{bagbook} several connections have been already established between  ${\mathcal D}$-pseudo-bosons and this {\em extended} quantum mechanics, and our results on complex Hermite polynomials and {on dual integral quantizations} suggest that more can be established. This is, in fact, part of our future work.

\appendix

\section[Irreducible finite-dimensional representations of ${\rm GL}(2,{\mathbb C})$]{Irreducible f\/inite-dimensional representations of $\boldsymbol{{\rm GL}(2,{\mathbb C})}$}
\label{appuir}

The linear action of a  $2\times2$ complex matrix $\gamma = \begin{pmatrix}
  \gamma_{11}    &   \gamma_{12}  \\
  \gamma_{21}     &  \gamma_{22}
\end{pmatrix}$ on vector space ${\mathbb C}^2$ is def\/ined in the usual way as
\begin{gather*}
\gamma\cdot\mathbf{x}= \gamma\binom{x_1}{x_2} = \binom{ \gamma_{11} x_1+  \gamma_{12}x_2}{ \gamma_{21}x_1 +   \gamma_{22}x_2} .
\end{gather*}
We now consider the linear representation $\mathcal{T}^s$ of ${\rm GL}(2, {\mathbb C})$ carried on by the complex vector space $\mathcal{V}^s$ of two-variable homogeneous polynomials $p(\mathbf{x})$ of f\/ixed degree $s$ in the following way
\begin{gather}
\label{repglpol}
\left(\mathcal{T}^s(\gamma)p\right)(\mathbf{x})= p\big({}^t\gamma\cdot\mathbf{x}\big) ,
\end{gather}
where ${}^t\gamma$ is the transpose of $\gamma$. In the monomial basis of f\/ixed degree~$s$
\begin{gather}
\label{monbas}
e_{n_1,n_2}(\mathbf{x}) := \frac{x_1^{n_1}x_2^{n_2}}{\sqrt{n_1! n_2!}}  , \qquad s=n_1+n_2
\end{gather}
the matrix elements $\mathcal{T}^s_{n'_1,n'_2;n_1,n_2}(\gamma)$ of $\mathcal{T}^s(\gamma)$, def\/ined by
\begin{gather}
\label{defmatel}
e_{n_1,n_2}({}^t\gamma\cdot\mathbf{x}) = \sum_{n'_1,n'_2=s-n^{\prime}_1}\mathcal{T}^s_{n'_1,n'_2;n_1,n_2}(\gamma)  e_{n'_1,n'_2}(\mathbf{x}) ,
\end{gather}
are given by
\begin{gather}
\nonumber \mathcal{T}^s_{n'_1,n'_2;n_1,n_2}(\gamma) =  \sqrt{\frac{n'_1!n'_2!}{n_1!n_2!}} \gamma_{21}^{n_1} \gamma_{12}^{n'_1} \gamma_{22}^{n_2-n'_1}\sum_{j=0}^{n'_1}\binom{n_1}{j}  \binom{n_2}{n'_1-j}  \rho^{j}\\
\hphantom{\mathcal{T}^s_{n'_1,n'_2;n_1,n_2}(\gamma)}{}
 =\sqrt{\frac{n'_1!n'_2!}{n_1!n_2!}} (\det \gamma)^{n'_1}  \gamma_{21}^{n_1-n'_1}  \gamma_{22}^{n_2-n'_1} P_{n'_1}^{(n_1-n'_1,n_2-n'_1)}\left(\frac{\rho+1}{\rho-1}\right)  , \label{matelgamapp}\\
  \rho:= \frac{\gamma_{11}\gamma_{22}}{\gamma_{12}\gamma_{21}}  .\nonumber
\end{gather}
We impose the constraints $n_1+n_2= s=n^{\prime}_1 + n^{\prime}_2$, which have to be satisf\/ied in all these expressions. However we keep the two summation indices for notational convenience.
The polynomials $P_{n}^{(\alpha,\beta)}$ are the Jacobi polynomials  given \cite{magnus66} by
\begin{gather*}
P_{n}^{(\alpha,\beta)}(x)  = 2^{-n}(x-1)^n\sum_{j=0}^n\binom{n+\alpha}{j} \binom{n+\beta}{n-j}\left(\frac{x+1}{x-1}\right)^j \\
\hphantom{P_{n}^{(\alpha,\beta)}(x)}{}
= \binom{n+\alpha}{n}\, {}_2F_1\left(-n,\alpha+\beta+n+1;\alpha+1;\frac{1-x}{2}\right)  .
\end{gather*}
Note the alternative and simpler  form of \eqref{matelgamapp} in terms of the above hypergeometric function, due to the relation $\frac{\rho+1}{\rho-1}= 1 +2\dfrac{\gamma_{12}\gamma_{21}}{\det \gamma}$:
\begin{gather*}
\mathcal{T}^s_{n'_1,n'_2;n_1,n_2}(\gamma)
= \sqrt{\frac{n'_1!n'_2!}{n_1!n_2!}} (\det \gamma)^{n'_1} \gamma_{21}^{n_1-n'_1} \gamma_{22}^{n_2-n'_1} \binom{n_1}{n'_1} \\
\hphantom{\mathcal{T}^s_{n'_1,n'_2;n_1,n_2}(\gamma)=}{}
\times {}_2F_1\left(-n'_1,n'_2+1;n_1-n'_1+1;\frac{\gamma_{12}\gamma_{21}}{\det \gamma}\right).
\end{gather*}
In particular the diagonal elements read as
\begin{gather}
\label{diagmatel}
\mathcal{T}^s_{n_1,n_2;n_1,n_2}(\gamma)= (\det \gamma)^{n_1} \gamma_{22}^{n_2-n_1}\, {}_2F_1\left(-n_1,n_2+1;1;\frac{\gamma_{12}\gamma_{21}}{\det \gamma}\right).
\end{gather}
Finally, note the property
\begin{gather}
\mathcal T^s\big(g^\dag\big) = \big(\mathcal T^s (g)\big)^*  .
\label{star-rep}
\end{gather}

\section{Asymptotic behavior of matrix elements}
\label{asympTs}
In this appendix we give the asymptotic behavior of the diagonal  matrix elements $\mathcal{T}^s_{n_1,n_2;n_1,n_2}(h)$, for a positive matrix $h$,  for large $n_1$, $n_2$, for two types of directions in the positive two-dimensional  square lattice  $\Lambda_{++}= \{(n_1,n_2)\, | \, n_1 , n_2 \in {\mathbb N}\}$.

\subsection*{Behavior at large $\boldsymbol{n_1}$, $\boldsymbol{n_2}$, with f\/ixed $\boldsymbol{d= n_2- n_1}$}

To study this behavior, we use the expression \eqref{aspolJacobi},  with $d=n_2-n_1$, of the diagonal elements in terms of the Jacobi polynomials:
\begin{gather*}
\mathcal{T}^s_{n_1,n_2;n_1,n_2}(h)= (\det h)^{n_1}  h_{22}^{d}  P_{n_1}^{(0,d)}\left(X\right)  , \qquad X= 1 + 2\frac{\vert h_{12}\vert^2}{\det h}  .
\end{gather*}
We suppose $d\geq 0$ with no  loss of  generality. From \cite{magnus66} we know that, at large $n$,
\begin{gather*}
P_n^{(\alpha,\beta)}(x)\sim \frac{1}{\sqrt{2\pi n}} (x-1)^{-\frac{\alpha}{2}}(x+1)^{-\frac{\beta}{2}} \!\big[\sqrt{x+1} + \sqrt{x-1}\big]^{\alpha + \beta} (x-1)^{-\frac{1}{4}} \!\big[ x + \sqrt{x^2-1}\big]^{n+ \frac{1}{2}}
\end{gather*}
holds for $x>1$ or $x<1$.
Applied to the present case this leads to the asymptotic behavior of $\mathcal{T}^s_{n_1,n_2;n_1,n_2}(h)$ for f\/ixed $d= n_2-n_1 >0$
\begin{gather}
\label{asympTsdI}
\mathcal{T}^s_{n_1,n_2;n_1,n_2}(h)  \sim  \frac{\left(2\vert h_{12}\vert^2  \det h\right)^{-1/4}}{\sqrt{2\pi n_1}}
 \left(\frac{h_{22}}{h_{11}}\right)^{d/2}  \left[ \sqrt{h_{11} h_{22}} + \vert h_{12}\vert\right]^{n_1 + n_2 +1}\\
 \hphantom{\mathcal{T}^s_{n_1,n_2;n_1,n_2}(h)}{}
\label{asympTsdII}  = \frac{h_{11}^{n_1}  h_{22}^{n_2}}{\sqrt{2\pi n_1}}  [2r(1-r)]^{-1/4}  \left[1 + \sqrt{r}\right]^{n_1+n_2 + 1}  , \qquad 0 < r= \frac{\vert h_{12}\vert^2}{h_{11}h_{22}} < 1 .
\end{gather}

\subsection*{Complete estimate  for large $\boldsymbol{n_1}$, $\boldsymbol{n_2}$}

Since the \textit{a priori} f\/ixed $d= n_2-n_1$ can be arbitrarily large, \eqref{asympTsdI} is valid for arbitrarily large $n_1$ and $n_2 \geq n_1$. In the case $n_1> n_2$ it is enough to  permute $1\leftrightarrow 2$ in the right-hand side of~\eqref{asympTsdII}.
This formula provides a lower bound to $\mathcal{T}^s_{n_1,n_2;n_1,n_2}(h)$ since  for any $r \in [0,1]$,  $0\leq r(1-r) \leq 1/4$, with maximum reached for $r=1/2$,  and $1\leq 1+ \sqrt r \leq 2$.

Hence, using also \eqref{upbound}, we get the estimates
\begin{gather*}
 \frac{h_{11}^{n_1} h_{22}^{n_2}}{\sqrt{\pi \min(n_1,n_2)}}\leq \mathcal{T}^s_{n_1,n_2;n_1,n_2}(h) \leq \frac{(n_1 + n_2)!}{n_1!n_2!}  h_{11}^{n_1} h_{22}^{n_2}  \leq (\mathrm{tr}\,h)^{n_1 + n_2} ,
\end{gather*}
the lower bound being asymptotic at large $n_1$, $n_2$,  whereas the upper bound is valid for any~$n_1$,~$n_2$.

\subsection*{Another exploration: behavior at large $\boldsymbol{n_1}$, $\boldsymbol{n_2}$, with f\/ixed $\boldsymbol{\nu = n_2/n_1}$}

Without loss of generality we suppose $\nu >1$.
In order to  analyze  this case, it is better to start from the simplest expression \eqref{alterform3dev}
\begin{gather}
\label{symsumA}
\mathcal{T}^s_{n_1,n_2;n_1,n_2}(h)=  h_{11}^{n_1}  h_{22}^{n_2} \sum_{m=0}^{n_1}\binom{n_1}{m}\binom{\nu n_1}{m}  r^{m}  , \qquad r= \frac{\vert h_{12}\vert^2}{h_{11} h_{22}}  .
\end{gather}
We know already that for $r=1$ this simplif\/ies to
\begin{gather}
\label{symsumr1}
\mathcal{T}^s_{n_1,n_2;n_1,n_2}(h)=  h_{11}^{n_1} h_{22}^{n_2} \binom{n_1+ n_2}{n_1}= h_{11}^{n_1}\, h_{22}^{n_2}\,\frac{(n_1 +n_2)!}{n_1! n_2!} ,
\end{gather}
From the Stirling formula,
\begin{gather*}
n!\sim\sqrt{2\pi}  e^{-n} n^{n+1/2}\qquad \mbox{for large $n$} ,
\end{gather*}
we see that, at large $n_1$, $n_2$, \eqref{symsumr1} behaves as
\begin{gather*}
\mathcal{T}^s_{n_1,n_2;n_1,n_2}(h) \sim \sqrt{\frac{n_1+n_2}{2\pi n_1 n_2}}  \frac{(n_1 +n_2)^{n_1+n_2}}{n_1^{n_1}  n_2^{n_2}}  h_{11}^{n_1}  h_{22}^{n_2}  ,
\end{gather*}
and if $n_2= \nu n_1$ with large~$\nu$, the above expression becomes
\begin{gather*}
\mathcal{T}^s_{n_1,n_2;n_1,n_2}(h) \sim \sqrt{\frac{n_1+n_2}{2\pi n_1 n_2}}  \frac{(n_1 +n_2)^{n_1+n_2}}{n_1^{n_1}  n_2^{n_2}}  h_{11}^{n_1}  h_{22}^{n_2}\sim \frac{1}{\sqrt{2\pi n_1 \nu}}  (1+\nu)^{n_1 + 1/2}  (e h_{11})^{n_1}  h_{22}^{\nu n_1}  .
\end{gather*}

We now consider the general case $r<1$. From the Stirling formula
we derive the asymptotic behavior of binomial coef\/f\/icient at large $n$,
\begin{gather*}
\nonumber \binom{n}{m=n\xi} \sim \frac{1}{\sqrt{2\pi n\xi(1-\xi)}}  e^{-n(\xi\log \xi + (1-\xi) \log(1-\xi))}  ,
\end{gather*}
where we have introduced the ``continuous'' variable  $\xi = m/n$, $0<\xi<1$. In the present case, we write $\xi = m/n_1$ and we  replace  the  sum $\sum\limits_{m=0}^{n_1}$ in~\eqref{symsumA} by the integral $\int_0^1 n_1\mathrm{d} \xi$ (or $\int_{\epsilon}^{1-\epsilon'} n_1\mathrm{d} \xi$ if some regularization is needed).  We obtain
\begin{gather*}
 \mathcal{T}^s_{n_1,n_2;n_1,n_2}(h)\sim  h_{11}^{n_1} h_{22}^{n_2} \frac{1}{2\pi}\int_{\epsilon}^{1-\epsilon'}\left[\xi^2(1-\xi) \left(1-\frac{\xi}{\nu}\right)\right]^{-1/2} e^{n_1A(\xi)} \mathrm{d} \xi,
\end{gather*}
with
\begin{gather*}
A(\xi)= -\left[ 2\xi\log \xi  - \xi\log \nu r +  (1-\xi)\log(1- \xi) + \nu\left(1-\frac{\xi}{\nu}\right)\log\left(1- \frac{\xi}{\nu}\right)\right] .
\end{gather*}
 Next we apply the Laplace method for evaluating the above integral for large~$n_1$ and~$\nu$, ignoring the divergence at the origin.  Laplace's approximation formula (with suitable conditions on the functions involved) reads
\begin{gather*}
\int_a^b h(x)  e^{nA(x)} \mathrm{d} x \approx \sqrt{\frac{2\pi}{n\vert A''(x_0)\vert}}h(x_0) e^{nA(x_0)}\qquad \mbox{as} \ n\to \infty ,
\end{gather*}
where $A'(x_0)= 0$ for $x_0\in [a,b]$, $A''(x_0)< 0$  and $h$ is positive.
Here, we have
\begin{gather*}
A'(\xi)= \log(1-\xi)\left(1-\frac{\xi}{\nu}\right) -\log \xi^2  + \log \nu r   ,
\\
A''(\xi) = - \frac{1}{1-\xi} -\frac{2}{\xi} - \frac{1}{\nu-\xi} .
\end{gather*}
We notice that $A'(\xi)< 0$ in the integration interval.
The equation $A'(\xi) = 0$ is equivalent to
\begin{gather*}
\xi^2  + \frac{r(1+\nu)}{1-r}\xi - \frac{r\nu}{1-r}= 0 .
\end{gather*}
The positive root is
\begin{gather*}
\xi_+= \frac{\sqrt{r}}{2(1-r)}\left[\sqrt{r(\nu-1)^2 +4\nu} - \sqrt{r}(1+\nu)\right] .
\end{gather*}
We easily check that $\xi_+= 0$ at $r=0$, that $\xi_+ \to \nu/(1+\nu)$ as $r\to 1_-$, and that, at f\/ixed $\nu \geq 1$, $\xi_+>0$ in the range $0<r \leq 1$. Also, for $\nu = 1$, $\xi_+= \sqrt{r}/(1+\sqrt{r})$.  Therefore, $\xi_+\in (0,1)$ for all $r \in (0,1)$ and $\nu\in [1,\infty)$.
Now, we have at $\xi= \xi_+$
\begin{gather*}
A (\xi_+ )= -\log(1-\xi_+)\left(1-\frac{\xi_+}{\nu}\right)^{\nu} ,
\\
A'' (\xi_+ )= -\frac{2-\left(1+\frac{1}{\nu}\right)\xi_+}{\xi_+  (1-\xi_+)  \left(1-\frac{\xi_+}{\nu}\right) }  .
\end{gather*}
Applying the Laplace formula yields the f\/inal result
\begin{gather*}
 \mathcal{T}^s_{n_1,n_2;n_1,n_2}(h)\sim  \frac{h_{11}^{n_1} h_{22}^{n_2}}{\sqrt{2\pi n_1}} \left[\xi_+ \left(2-\left(1 +\frac{1}{\nu}\right)\xi_+\right)\right]^{-1/2}(1-\xi_+)^{-n_1} \left(1-\frac{\xi_+}{\nu}\right)^{-n_2} .
\end{gather*}

\section[Matrix elements of ${\mathfrak{D}}(z)$ and $\widetilde{\mathfrak{D}}(z)$]{Matrix elements of $\boldsymbol{{\mathfrak{D}}(z)}$ and $\boldsymbol{\widetilde{\mathfrak{D}}(z)}$}
\label{AmatelMD}

The calculation of the matrix elements ${\mathfrak{D}}_{mn}(z):= \langle \Psi_m | {\mathfrak{D}}(z) | \varphi_n \rangle$ of ${\mathfrak{D}}(z)$ (and consequently for $\widetilde{\mathfrak{D}}(z)$ because of the relation~\eqref{matelTMD}) can be carried out by using the resolution of the identity~\eqref{bicoresI}, satisf\/ied by the bi-coherent states, their projective  covariance property~\eqref{covbicoh} and the reproducing properties of their overlap function
\begin{gather*}
  \langle \Psi_m | {\mathfrak{D}}(z) | \varphi_n \rangle  = \iint_{{\mathbb C}^2}\frac{d^2 z^{\prime}}{\pi}  \frac{d^2 z^{\prime\prime}}{\pi}\langle \Psi_m |\varphi(z^{\prime})\rangle \langle \Psi(z^{\prime})| {\mathfrak{D}}(z) | \varphi(z^{\prime\prime})\rangle \langle \Psi(z^{\prime\prime})|\varphi_n\rangle\\
\hphantom{\langle \Psi_m | {\mathfrak{D}}(z) | \varphi_n \rangle}{}
 =\frac{1}{\sqrt{m!n!}} \iint_{{\mathbb C}^2}\frac{d^2 z^{\prime}}{\pi}  \frac{d^2 z^{\prime\prime}}{\pi}  e^{-\frac{\vert z^{\prime} \vert^2}{2}}  e^{-\frac{\vert z^{\prime\prime} \vert^2}{2}}  {z^{\prime}}^m  {\bar{z^{\prime\prime}}}^n  e^{i\operatorname{Im} (z \bar{z^{\prime\prime}})}  \langle \Psi(z^{\prime})| \varphi(z^{\prime\prime} + z)\rangle\\
\hphantom{\langle \Psi_m | {\mathfrak{D}}(z) | \varphi_n \rangle}{}
= \frac{e^{-\frac{\vert z \vert ^2}{2}}}{\sqrt{m!n!}} \int_{{\mathbb C}}\frac{d^2 z^{\prime\prime}}{\pi}e^{-\vert z^{\prime\prime}\vert^2}  e^{-\bar{z z^{\prime}}} {\bar{z^{\prime\prime}}}^n (z^{\prime\prime} + z)^m .
\end{gather*}
After  binomial and exponential expansions and integration, one ends up with the following expression
\begin{gather*}
\langle \Psi_m | {\mathfrak{D}}(z) | \varphi_n \rangle  =
\begin{cases} \displaystyle \sqrt{\frac{n!}{m!}}  e^{-\frac{\vert z \vert ^2}{2}}  \alpha^{m-n}  L_{n}^{(m-n)}\left(\vert z\vert^2\right) & \mbox{for} \ \  n \leq m  , \vspace{1mm} \\
\displaystyle  \sqrt{\frac{m!}{n!}}  e^{-\frac{\vert z \vert ^2}{2}}  (-\bar{z})^{n-m}  L_{m}^{(n-m)}\left(\vert z \vert^2\right) & \mbox{for} \ \ n > m ,
\end{cases}
\end{gather*}
where
\begin{gather*}
L_n^{(\mu)}(x) = \sum_{k=0}^n (-1)^k \frac{\Gamma(n + \mu +1)}{\Gamma(\mu + k +1)(n-k)!}  \frac{x^k}{k!}  ,
\end{gather*}
is a generalized Laguerre polynomial.

\subsection*{Acknowledgements}

The authors are indebted to referees for their relevant  and  constructive comments and suggestions.
They  acknowledge f\/inancial support from the Universit\`a di Palermo.
S.T.A.~acknow\-led\-ges a~grant from the Natural Sciences and Engineering Research Council (NSERC) of Canada, F.B.~acknowledges support from GNFM, J.P.G.~thanks the CBPF and the CNPq for f\/inancial support and CBPF for hospitality.


\pdfbookmark[1]{References}{ref}
\LastPageEnding

\end{document}